\newtheorem{theorem}{Theorem}[section]
\newtheorem{proposition}[theorem]{Proposition}
\newtheorem{lemma}[theorem]{Lemma}
\theoremstyle{definition}
\def\!{\mathop{\mathrm{!}}}
\def\x{\textbf{x}}
\def\f{\bar{f}}
\def\hT{\hat{T}}
\newlength{\boxwidth}
\author[1]{Manh Hong Duong}
\author[2]{The Anh Han}
\affil[1]{School of Mathematics, University of Birmingham, Birmingham B15 2TT, UK. Email: h.duong@bham.ac.uk}
\affil[2]{School of Computing and Digital Technologies, Teesside University, TS1 3BX, UK. Email: T.Han@tees.ac.uk}
\title{Statistics of the number of equilibria in random social dilemma evolutionary games with mutation}
\begin{document}
\maketitle
\begin{abstract}
In this paper, we study analytically the statistics of the number of equilibria in pairwise social dilemma evolutionary games with mutation where a game's payoff entries are random variables. Using the replicator-mutator equations, we provide explicit formulas for the probability distributions of the number of equilibria as well as other statistical quantities. This analysis is highly relevant assuming that one might know the nature of a  social dilemma game at hand (e.g., cooperation vs coordination vs anti-coordination), but measuring the exact  values of its payoff entries is difficult. Our delicate analysis shows clearly the influence of the  mutation probability on these probability distributions, providing insights into how varying this important factor impacts the overall  behavioural or biological diversity of the underlying evolutionary systems.
\end{abstract}
\section{Introduction}
\subsection{The replicator-mutator equation}
The replicator-mutator equation is a set of differential equations describing the evolution of frequencies of different strategies in a population that takes into account both selection and mutation mechanisms. It has been employed in the study of, among other applications, population genetics \cite{Hadeler1981}, autocatalytic reaction networks
\cite{StadlerSchuster1992}, language evolution
\cite{Nowaketal2001}, the evolution of cooperation \cite{Imhof-etal2005,nowak:2006bo} and dynamics of behavior in social networks \cite{Olfati2007}. 

Suppose that in an infinite population there are $n$ types/strategies $S_1,\cdots, S_n$ whose frequencies are, respectively, $x_1,\cdots, x_n$. These types undergo selection; that is, the reproduction rate of each type, $S_i$, is determined by its fitness or average payoff, $f_i$, which is obtained from interacting with other individuals in the population. The interaction of the individuals in the population is carried out within randomly selected groups of $d$ participants (for some integer $d$). That is, they play and obtain their payoffs from a $d$-player game, defined by a payoff matrix. We consider here symmetric games where the payoffs do not depend on the ordering of the players in a group. Mutation is included by adding the possibility that individuals spontaneously change from one strategy to another, which is modeled via a mutation matrix, $Q=(q_{ji}), j,i\in\{1,\cdots,n\}$. The entry $q_{ji}$ denotes the  probability that a player of type $S_j$ changes its type or strategy to $S_i$. The mutation matrix $Q$ is a row-stochastic matrix, i.e.,
\[
\sum_{j=1}^n q_{ji}=1, \quad 1\leq i\leq n.
\]
The replicator-mutator is then given by, see e.g. \cite{Komarova2001JTB,Komarova2004, Komarova2010, Pais2012} 
\begin{equation}
\label{eq: RME}
\dot{x}_i=\sum_{j=1}^n x_j f_j(\x)q_{ji}- x_i \f(\x)=:g_i(x),\qquad  i=1,\ldots, n,
\end{equation}
where $\x = (x_1, x_2, \dots, x_n)$ and $\f(\x)=\sum_{i=1}^n x_i f_i(\x)$ denotes the average fitness of the whole population.  The replicator dynamics is a special instance of \eqref{eq: RME} when the mutation matrix is the identity matrix. 

\subsection{The replicator-mutator equation for two-player two-strategy games}
In particular, for two-player two-strategy games the replicator-mutator equation is
\begin{multline}
\label{eq: 2-2 games 1}
\dot{x}=q_{11}a_{11}x^2+q_{11}x(1-x)a_{12}+q_{21}x(1-x)a_{21}+q_{21}a_{22}(1-x)^2\\-x\Big(a_{11}x^2+(a_{12}+a_{21})x(1-x)+a_{22}(1-x)^2\Big),
\end{multline}
where $x$ is the frequency of the first strategy and $1-x$ is the frequency of the second one.
Using the identities $q_{11}=q_{22}=1-q, \quad q_{12}=q_{21}=q$, Equation \eqref{eq: 2-2 games 1} becomes
\begin{align}
\dot{x}&=\Big(a_{12}+a_{21}-a_{11}-a_{22}\Big)x^3+\Big(a_{11}-a_{21}-2(a_{12}-a_{22})+q(a_{22}+a_{12}-a_{11}-a_{21})\Big)x^2\nonumber
\\&\quad+\Big(a_{12}-a_{22}+q(a_{21}-a_{12}-2a_{22})\Big)x+q a_{22}.\label{eq: 2-2 games}
\end{align}
\paragraph{Two-player social dilemma games.}
In this paper, we focus on two-player (i.e. pairwise) social dilemma games. We adopt the following parameterized payoff matrix to study the full space of two-player  social dilemma games where the first strategy is cooperator and second is defector \cite{santos:2006pn,wang2015universal,szolnoki2019seasonal}, $a_{11} = 1; \  a_{22} = 0; $
$0 \leq a_{21} = T \leq 2$ and $-1 \leq a_{12} = S \leq 1$, that covers the following games
\begin{enumerate}[(i)]
\item the Prisoner's Dilemma (PD): $2\geq T > 1 > 0 > S\geq -1$,
\item the Snow-Drift (SD) game: $2\geq T > 1 > S > 0$,
\item the Stag Hunt (SH) game: $1 > T  > 0 > S\geq -1$,
\item the Harmony (H) game: $1  > T\geq 0, 1\geq S > 0$.
\end{enumerate}
In this paper, we are interested in random social dilemma games where $T$ and $S$ are uniform random variables in the corresponding intervals, namely
\begin{itemize}
\item In PD games: $T\sim U(1,2)$, $S\sim U(-1,0)$,
\item In SD games: $T\sim U(1,2)$, $S\sim U(0,1)$,
\item In SH games: $T\sim U(0,1)$, $S\sim U(-1,0)$,
\item In H games: $T\sim U(0,1)$, $S\sim U(0,1)$.
\end{itemize} 
Random evolutionary games, in which the payoff entries are random variables, have been employed extensively to model social and biological systems in which  very limited information is available, or where the environment changes so rapidly and frequently that one cannot describe the payoffs of their inhabitants' interactions  \cite{broom:1993pa,broom:2005aa,may2001stability,HTG12,DH15,DuongHanJMB2016,gokhale:2010pn,DuongTranHanJMB}. 
Equilibrium points of such evolutionary system are the compositions of strategy frequencies where all the strategies have the same average fitness. Biologically, they predict the co-existence of different types in a population and the maintenance of polymorphism.

In this paper, we are interested in computing the probability distributions of the number of equilibria, which is a random variable, in the above random social dilemmas. Answering this question is of great importance  in the context of social dilemmas since  one  might  know the nature of the game, i.e. the payoff entries ranking in the game, but it might be difficult to predict or measure the exact values  of these entries. When mutation is absent ($q = 0$),  the answer is trivial because there is always a fixed number of equilibria depending on the nature of the social dilemmas \cite{nowak:2006bo,sigmund:2010bo,DuongHanDGA2020}. As shown in our analysis below, this is however not the case any longer when mutation is non-negligible, and this number  highly depends on the nature of the social dilemma too.  

The following result \cite{DuongHanDGA2020} provides partial information about these probability distributions.
\begin{theorem}({DuongHanDGA2020})
\label{thm: previous}
Suppose that $S$ and $T$ are uniformly distributed in the corresponding intervals as above. Then
\begin{itemize}
\item $p_1^{SD}=p_3^{SD}=0, \quad p_2^{SD}=1$.
\item $p_1^{H}=p_3^{H}=0,\quad p_2^{H}=1$.
\item $p_2^{SH}=\frac{q}{2(1-q)}$.
\item $p_2^{PD}=\begin{cases}
\frac{3 q}{2(1-q)}\quad \text{if}\quad 0<q\leq 1/3,\\
3-\frac{1}{2q(1-q)}\quad \text{if}\quad 1/3\leq q\leq 1/2. 
\end{cases}$
\end{itemize}
\end{theorem}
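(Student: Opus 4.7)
The plan is to factor the cubic $g(x)$ from \eqref{eq: 2-2 games} and reduce the counting problem to a sign condition on a single linear form. Substituting $a_{11}=1$, $a_{12}=S$, $a_{21}=T$, $a_{22}=0$ the constant term $qa_{22}$ vanishes, so I can write $g(x)=x\,h(x)$ with
\[
h(x)=(S+T-1)x^{2}+\bigl[(1-T-2S)+q(S-T-1)\bigr]x+\bigl[(1-q)S+qT\bigr].
\]
Thus $x=0$ is always an equilibrium, and a direct evaluation gives $h(0)=(1-q)S+qT$ and $h(1)=-q<0$, so $x=1$ is never an equilibrium. The total number of equilibria equals $1+\#\{\text{roots of }h\text{ in }(0,1)\}$.

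The key qualitative observation is that, since $h$ is quadratic and $h(1)<0$, the sign of $h(0)$ alone decides whether the interior count is odd or even. If $h(0)>0$, the single sign change between the endpoints forces exactly one root of $h$ in $(0,1)$ and the other real root outside $[0,1]$, giving two equilibria in total. If $h(0)<0$, the two endpoint values share the same sign and the number of interior roots is either $0$ or $2$, i.e.\ contributes to $p_{1}$ or $p_{3}$ but never to $p_{2}$. Up to the measure-zero event $\{h(0)=0\}$ this yields the master identity $p_{2}=\Pp\bigl((1-q)S+qT>0\bigr)$. For SD and H the random variables $S$ and $T$ are both strictly positive, so $h(0)>0$ almost surely and therefore $p_{2}=1$, $p_{1}=p_{3}=0$ drop out immediately.

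What remains is to integrate the indicator of $\{(1-q)S+qT>0\}$ over the appropriate rectangle in the negative-$S$ cases. Writing $s=-S\in(0,1)$, the event becomes $s<qT/(1-q)$. For SH with $T\sim U(0,1)$ and the implicit restriction $q\le 1/2$, the bound $qT/(1-q)$ stays in $[0,1]$ throughout, which produces $\int_{0}^{1}qT/(1-q)\,dT=q/[2(1-q)]$. For PD with $T\sim U(1,2)$, the bound crosses $1$ at $T=(1-q)/q$, and this threshold lies in $(1,2)$ iff $q\in[1/3,1/2]$; splitting the $T$-integral at that point (for $q\ge 1/3$) or not at all (for $q\le 1/3$) gives the two stated branches after a routine computation. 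I expect the only genuine obstacle to be a careful justification of the dichotomy in the second paragraph: one must confirm that the regime $h(0)<0$ really never produces exactly one interior root, which here hinges essentially on the identity $h(1)=-q<0$ forced by the normalization $a_{22}=0$.
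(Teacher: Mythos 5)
Your argument is correct and complete. Note that the paper does not actually prove Theorem \ref{thm: previous} here (it is imported from the cited reference), but your key dichotomy is precisely the mechanism behind the paper's Section 2 machinery: under the substitution $t=x/(1-x)$ the quadratic $h$ becomes $g(t)=-qt^2+(-q-a+c)t+c$, whose leading coefficient $-q$ equals your $h(1)$ and whose constant term $c=(1-q)S+qT$ equals your $h(0)$, so ``exactly one root of $h$ in $(0,1)$'' is equivalent to $g$ having exactly one positive root, i.e.\ to $c=h(0)>0$. Your endpoint evaluations and the two integrals for SH and PD (including the split of the $T$-integral at $T=(1-q)/q$, which lies in $(1,2)$ exactly when $1/3<q<1/2$) all check out against the stated formulas.
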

According to the above theorem, SD games and H games are simple and complete. However, the probabilities of having $3$ equilibria (or alternatively $1$ equilibrium) in  SH and PD games are left open in \cite{DuongHanDGA2020}. The key challenge is that the conditions for these games to have 3 equilibria (or alternatively 1 equilibrium) are much more complicated than those of  2 equilibria. The aim of this paper is to complete the above theorem, providing explicit formulas for these probabilities. As such, it will also allow us to derive other statistical quantities (such as average and variance), which are important to understand the overall distribution and complexity of equilibrium points in pairwise social dilemmas (with mutation). To this goal, we employ suitable changes of variables, which transform the problem of computing the probabilities to calculating areas, and perform delicate analysis.
\subsection{Main results}
The main result of this paper is the following.
\begin{theorem}
\label{thm: main thm}
The probability that SH and PD games have $3$ equilibria is given by, respectively
\begin{align*}
&p_3^{SH}=
\begin{cases}
1-\frac{q}{2(1-q)}-\frac{1}{1-2q}\Big[\frac{(3\sqrt{q}+2)^2\sqrt{q}(5q^{3/2}+3q^2-9q-3\sqrt{q}+4)}{12(\sqrt{q}+1)^3}
\\ \hspace*{5cm}+~\frac{-27 q^3-18q^2-32\sqrt{1-2q}q+48q+16\sqrt{1-2q}-16}{12q}\Big],\quad 0<q\leq 4/9\\
1-\frac{q}{2(1-q)}-\frac{8\sqrt{q}(1-2q)^2}{3(1-q)^3},\quad 4/9<q<0.5
\end{cases}
\\&
\\&
\\&p_3^{PD}=\begin{cases}
\frac{1}{1-2q}\Big[-\frac{2(q^3+3q^2+(4\sqrt{1-2q}-6)q-2\sqrt{1-2q}+2)}{3q}-\frac{1}{2}\frac{q^3}{(1-q)}\Big],\quad 0\leq q\leq \frac{3-\sqrt{5}}{2},\\
\frac{-16 \left(\sqrt{1-2 q}-1\right) q^{3/2}+2 q^{5/2}+15 q^3+\left(8 \sqrt{1-2 q}-25\right) q^2+q-8 \sqrt{1-2 q}+2 \sqrt{q} \left(8 \sqrt{1-2 q}-5\right)+5}{6 \left(\sqrt{q}-1\right)^3 \left(q^{3/2}+q\right)},\quad \frac{3-\sqrt{5}}{2}<q\leq 4/9,\\
\frac{1}{2}\frac{(1-2q)^2}{q(1-q)},\quad 4/9<q<0.5.
\end{cases}
\end{align*}
\end{theorem}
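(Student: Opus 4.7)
The plan is to reduce the equilibrium count to counting roots of a quadratic in $(0,1)$, then compute the resulting parameter area via an affine change of variables and careful case analysis in $q$.

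\textbf{Step 1 (Reduction to a quadratic).} Substituting $a_{11}=1$, $a_{22}=0$, $a_{21}=T$, $a_{12}=S$ into \eqref{eq: 2-2 games} yields $\dot x = x\,Q(x)$ with
\[
Q(x) = A x^2 + B x + C, \qquad A = S+T-1,\quad B = (1-q) - (1+q)T - (2-q)S,\quad C = qT + (1-q)S.
\]
A direct calculation gives $Q(1) = A+B+C = -q < 0$, so $x=0$ is always an equilibrium while $x=1$ is never one. Hence the game has exactly three equilibria in $[0,1]$ iff $Q$ has two distinct roots in $(0,1)$, which by the standard root-location criteria is equivalent to
\[
A<0,\quad C<0,\quad 0 < -B/(2A) < 1,\quad \Delta := B^2 - 4AC > 0.
\]

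\textbf{Step 2 (Change of variables).} I introduce the affine map $(T,S)\mapsto(\alpha,\beta):=(A,C)$, whose Jacobian equals $1-2q$. A short calculation gives the clean identity $B = -q-\alpha-\beta$, so the four conditions of Step 1 become
\[
\alpha<0,\quad \beta<0,\quad \alpha+\beta < -q,\quad \beta > \alpha - q,\quad (\alpha-\beta)^2 + 2q(\alpha+\beta) + q^2 > 0.
\]
Solving $\Delta=0$ as a quadratic in $\beta$ has discriminant $-16q\alpha$, which is nonnegative exactly when $\alpha\leq 0$, and yields the two parabolic branches $\beta = \alpha - q \pm 2\sqrt{-q\alpha}$; these are the only curved parts of the favourable region, all other boundaries being lines.

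\textbf{Step 3 (Area computation and case analysis).} For each of SH and PD, I would compute the image parallelogram of the parameter square under $(T,S)\mapsto(\alpha,\beta)$ (the four vertices come directly from evaluating at the corners of the square), intersect it with the favourable region above, and apply the change-of-variables formula
\[
p_3 \;=\; \frac{1}{|1-2q|}\cdot\mathrm{Area}\bigl(\text{favourable region}\cap\text{image parallelogram}\bigr).
\]
The critical thresholds $q = 4/9$ (for SH and PD) and $q = (3-\sqrt 5)/2$ (for PD) correspond to the values of $q$ at which the upper parabolic branch $\beta = \alpha - q + 2\sqrt{-q\alpha}$ either becomes tangent to an edge of the parallelogram or passes through one of its vertices, so that the combinatorial structure of the intersection changes; this accounts for the case splits in the statement. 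Within each case the area reduces to one or two one-dimensional integrals whose main ingredient is $\int\sqrt{-q\alpha}\,d\alpha$, producing the $\sqrt q$ and $\sqrt{1-2q}$ factors visible in the final formulas.

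\textbf{Main obstacle and sanity checks.} The hardest part is the bookkeeping: tracking how the favourable region deforms as $q$ crosses each threshold, and simplifying the resulting algebraic expressions to the exact form stated. A useful parallel computation is to evaluate $p_1$ (the probability of no quadratic roots in $(0,1)$) and use the identity $p_3 = 1 - p_1 - p_2$ with $p_2$ known from Theorem~\ref{thm: previous}; in the PD case with $q>4/9$ this route is particularly clean, since the geometry forces $p_1=0$, yielding $p_3 = 1 - p_2 = (1-2q)^2/\bigl(2q(1-q)\bigr)$ directly and matching the third case of the statement.
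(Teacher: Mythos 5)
Your Steps 1--2 are sound and are essentially the paper's own setup in different clothes. Your conditions ($A<0$, $C<0$, vertex in $(0,1)$, $\Delta>0$) are a correct (if slightly redundant) characterisation of two distinct roots of $Q$ in $(0,1)$; the paper instead substitutes $t=x/(1-x)$ to get the equivalent three conditions $c<0$, $q+a-c<0$, $(q+a-c)^2+4qc>0$. Your coordinates $(\alpha,\beta)=(A,C)$ differ from the paper's $(X,Y)=(q+a-c,\,c-q)$ only by a unimodular affine map, so the Jacobian $1-2q$ and the area formula $p_3=\frac{1}{1-2q}\mathrm{Area}(\cdot)$ agree; the paper's choice has the cosmetic advantage that the discriminant boundary becomes the upright parabola $Y=-X^2/(4q)-q$ rather than your two branches $\beta=\alpha-q\pm 2\sqrt{-q\alpha}$, and you correctly identify that only the upper branch matters.

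The genuine gap is Step 3, which is where the entire content of the theorem lives. You state that you \emph{would} intersect the favourable region with the image parallelogram and integrate, but you never determine that intersection: you do not identify the image of the parameter square for SH (a quadrilateral with vertices at $(-1,-1)$, $(-(1-q),-q)$, $(-q,-(1-q))$, $(0,0)$ in the paper's coordinates) or for PD (a triangle), you do not locate the intersection points of the parabola with the bounding lines, you do not verify which of those points lie on which edges in each regime of $q$ (this is precisely what produces the thresholds $4/9$ and $(3-\sqrt5)/2$, via explicit comparisons such as $2(q+\sqrt{1-2q}-1)$ versus $-q$), and you do not evaluate a single one of the resulting integrals. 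Asserting that the case splits ``correspond to tangency or vertex-passing'' and that the integrals ``produce the $\sqrt q$ and $\sqrt{1-2q}$ factors'' describes the shape of the answer without deriving it; the stated closed forms are therefore not established. A secondary issue: your proposed shortcut for PD with $q>4/9$ rests on the claim that ``the geometry forces $p_1=0$'', which you do not prove; in the paper this fact is a \emph{consequence} of computing $p_3$ in that regime (it follows from $\mathrm{Area}(D\cap D_2)$ being the full triangle $MIP$), so as written your check assumes something at least as hard as what it is meant to verify.
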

The above theorem combines Theorem \ref{thm: SH} (for SH-games) and Theorem \ref{thm: PD} (for PD games), see Section \ref{sec: main}. Theorems \ref{thm: previous} and \ref{thm: main thm} provides explicitly the probability distributions of the number of equilibria for all the above-mentioned pairwise social dilemmas. In SH-games and PD-games, these distributions are much more complicated and significantly depend on the mutation strength. We summarize these results in the following summary box.

\newpage

\begin{mdframed}[linecolor=black,font={\sffamily},frametitle={\large{\underline{Box 1: Probability of having $k$ equilibria in a pairwise social dilemma ($p_k$)}}}]
\vspace{0.5cm}
$\bullet$ \textbf{Snow Drift (SD) }
$$
p_1=0, \quad p_2=1, \quad p_3=0.
$$
$\bullet$ \textbf{Harmony game (H)}
$$
p_1=0, \quad p_2=1, \quad p_3=0.
$$
$\bullet$ \textbf{Stag-Hunt game  (SH)}
\begin{align*}
p_1&=1-p_2-p_3,\\
p_2&=\frac{q}{2(1-q)}\\
p_3&=\begin{cases}
1-\frac{q}{2(1-q)}-\frac{1}{1-2q}\Big[\frac{(3\sqrt{q}+2)^2\sqrt{q}(5q^{3/2}+3q^2-9q-3\sqrt{q}+4)}{12(\sqrt{q}+1)^3}\\\hspace{5cm}+~\frac{-27 q^3-18q^2-32\sqrt{1-2q}q+48q+16\sqrt{1-2q}-16}{12q}\Big],\quad0<q\leq 4/9\\
1-\frac{q}{2(1-q)}-\frac{8\sqrt{q}(1-2q)^2}{3(1-q)^3},\quad 4/9<q<0.5.
\end{cases}
\end{align*}
$\bullet$ \textbf{Prisoner's Dilemma (PD)}
\begin{align*}
p_1&=1-p_2-p_3\\
p_2&=\begin{cases}
\frac{3 q}{2(1-q)}\quad \text{if}\quad 0<q\leq 1/3,\\
3-\frac{1}{2q(1-q)}\quad \text{if}\quad 1/3\leq q\leq 1/2. 
\end{cases}\\
p_3&=\begin{cases}
\frac{1}{1-2q}\Big[-\frac{2(q^3+3q^2+(4\sqrt{1-2q}-6)q-2\sqrt{1-2q}+2)}{3q}-\frac{1}{2}\frac{q^3}{(1-q)}\Big],\quad 0\leq q\leq \frac{3-\sqrt{5}}{2},\\
\frac{-16 \left(\sqrt{1-2 q}-1\right) q^{3/2}+2 q^{5/2}+15 q^3+\left(8 \sqrt{1-2 q}-25\right) q^2+q-8 \sqrt{1-2 q}+2 \sqrt{q} \left(8 \sqrt{1-2 q}-5\right)+5}{6 \left(\sqrt{q}-1\right)^3 \left(q^{3/2}+q\right)},\quad \frac{3-\sqrt{5}}{2}<q\leq 4/9,\\
\frac{1}{2}\frac{(1-2q)^2}{q(1-q)},\quad 4/9<q<0.5.
\end{cases}
\end{align*}
\end{mdframed}
As a consequence, we can now derive other statistical quantities such as the mean value, $\mathrm{ENoE}$, and the variance, $\mathrm{VarNoE}$ of the number of equilibria using the following formulas
\begin{equation}
\label{mean and variance}
\mathrm{ENoE}=\sum_{i=1}^3 i\, p_i,\quad \mathrm{VarNoE}=\sum_{i=1}^3 p_i (i-\mathrm{ENoE})^2.
\end{equation}
We depict  these quantities in Figure \ref{fig: plot all games} below. Our delicate analysis clearly shows the influence of the mutation on the probability distributions, thus on the complexity and bio-diversity of the underlying evolutionary systems. We believe that our analysis may be used as exemplary material for  teaching foundational courses in evolutionary game theory, computational/quantitative biology and applied probability.


\begin{figure}[htb!]
\begin{center}
\includegraphics[width=\linewidth]{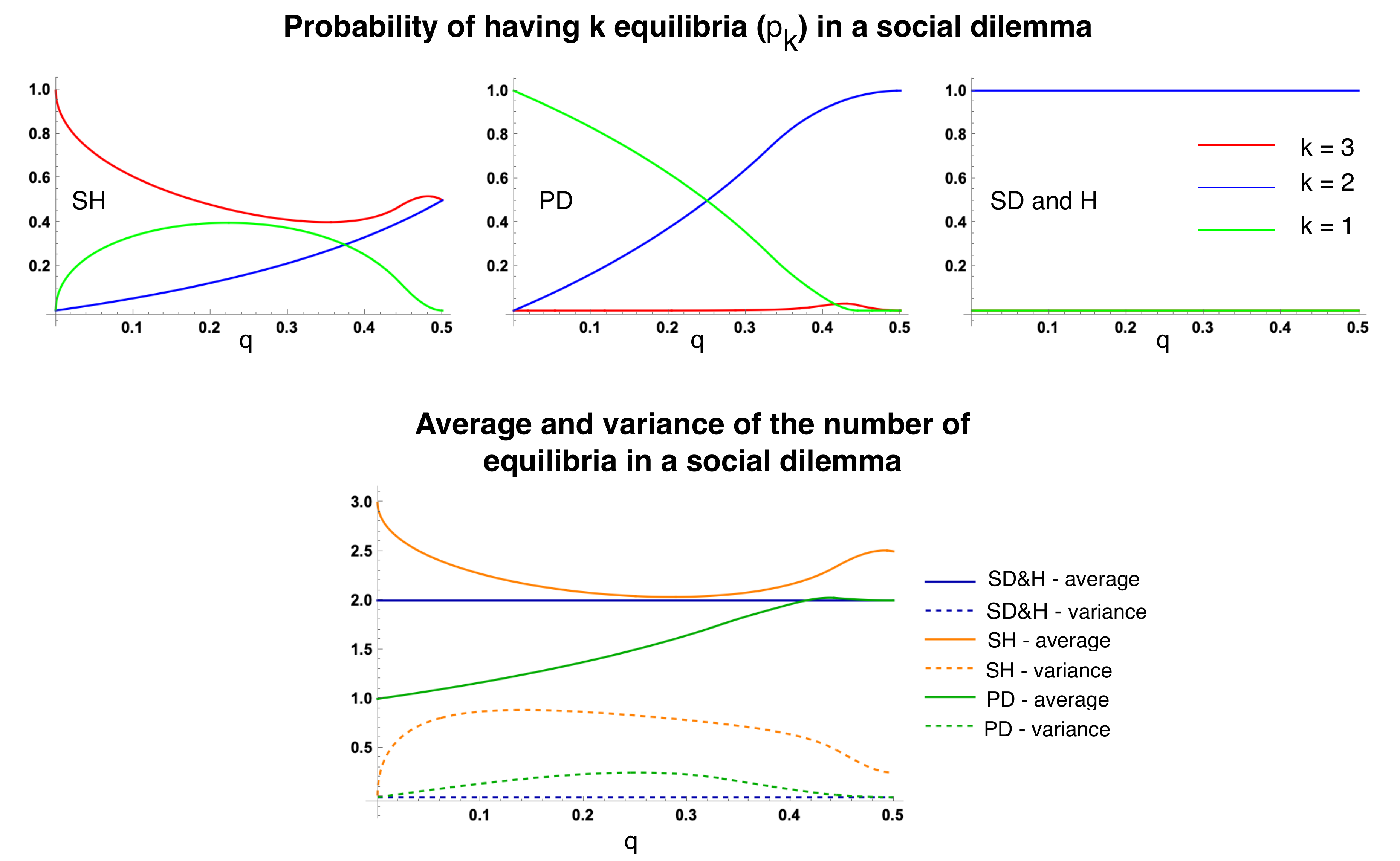}
\caption{Statistics of the number of equilibria in random social dilemmas as a function of the mutation probability $q$: probability of having $k$ ($k = 1, \ 2, \ 3$) equilibria (\textbf{top row}), average and variance (\textbf{bottom row}). We look at all the four pairwise games, from left to right: SH, PD, SH and H. The probability of having the maximal possible number of equilibria, i.e. $p_3$, is highest for SH. It is very small for PD and always equals 0 for SD and H games. The probability of having two equilibria, $p_2$, is highest for SD and H.  As a result, SH has the highest average number of equilibria across all games, for all $0 < q < 0.5$ (it is minimum when $q = 0.285$).  For most $q$, PD has the lowest number of equilibria (it is maximum when $q = 0.438$). All the figures are generated using analytical formulas derived in Box 1 (for $p_k$) and \eqref{mean and variance} for the average and variance. These analytical results are also in accordance with numerical simulation results provided in \cite{DuongHanDGA2020}, obtained  through samplings of the random payoff entries $T$ and $S$.}  
\label{fig: plot all games}
\end{center}
\end{figure}

The rest of the paper is organized as follows. In Section \ref{sec: main}, after recalling some preliminary details, we present the proof of the main theorem \ref{thm: main thm}, which we split into Theorem \ref{thm: SH} for SH games in subsection \ref{sec: SH}  and   Theorem \ref{thm: PD} for PD games in subsection \ref{sec: PD}. Finally, in Section \ref{sec: summary} we provide further discussion.
\section{Probability of having three equilibria in SH  and PD games}
\label{sec: main}
\subsection{Joint probability density via change of variable}
The following lemma is a well-known result to compute the probability density of random variables using change of variables.  We state here for two random variables that are directly applicable to our analysis, but the result is true in higher dimensional spaces.  
\begin{lemma} (joint probability density via change of variable, \cite[Section 3.3]{durrett1994})
\label{lem: joint pdf}
Suppose $(X_1, X_2)$ has joind density $f(x_1,x_2)$. Let $(Y_1,Y_2)$ be defined by $Y_1=u_1(X_1,X_2)$ and $Y_2=u_2(X_1, X_2)$. Suppose that the map $(X_1, X_2)\rightarrow (Y_1, Y_2)$ is invertible with $X_1=v_1(Y_1, Y_2)$ and $X_2=v_2(Y_1, Y_2)$.  Then the joint probability distribution of $Y_1$ and $Y_2$ is
$$
g(y_1, y_2)=|J| f(v_1(y_1,y_2), v_2(y_1, y_2)),
$$
where the Jacobian $J$ is given by
$$
J=\begin{vmatrix}\frac{\partial v_1(y_1, y_2)}{\partial y_1}& \frac{\partial v_1(y_1, y_2)}{\partial y_2}\\ \frac{\partial v_2(y_1,y_2)}{\partial y_1}& \frac{\partial v_2(y_1, y_2)}{\partial y_2}\end{vmatrix}.
$$
\end{lemma}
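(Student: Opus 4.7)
The plan is to establish the formula $g(y_1,y_2)=|J|\,f(v_1(y_1,y_2),v_2(y_1,y_2))$ by computing $P((Y_1,Y_2)\in B)$ in two different ways for an arbitrary Borel set $B\subset\mathbb{R}^2$ and matching the integrands.

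First I would unwind the definitions. Since $(Y_1,Y_2)=u(X_1,X_2)$ and the map $u=(u_1,u_2)$ is invertible with inverse $v=(v_1,v_2)$, for any Borel $B$ we have
$$
P((Y_1,Y_2)\in B)\;=\;P((X_1,X_2)\in u^{-1}(B))\;=\;\iint_{u^{-1}(B)} f(x_1,x_2)\,dx_1\,dx_2,
$$
where the second equality uses the hypothesis that $(X_1,X_2)$ admits joint density $f$.

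Next I would apply the multivariable change of variables theorem to the map $v:B\to u^{-1}(B)$, substituting $x_1=v_1(y_1,y_2)$ and $x_2=v_2(y_1,y_2)$, so that $dx_1\,dx_2=|J|\,dy_1\,dy_2$. This produces
$$
P((Y_1,Y_2)\in B)\;=\;\iint_B f(v_1(y_1,y_2),v_2(y_1,y_2))\,|J|\,dy_1\,dy_2.
$$
Since $B$ was arbitrary, the integrand on the right must equal the density of $(Y_1,Y_2)$ almost everywhere, which gives the claimed identity $g(y_1,y_2)=|J|\,f(v_1(y_1,y_2),v_2(y_1,y_2))$.

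The main obstacle is that mere invertibility of $u$ is not enough to invoke the classical change of variables formula; one needs $v$ to be a $C^1$ diffeomorphism with non-vanishing Jacobian $J$ on the support of the transformed density (these regularity hypotheses are implicit in the lemma, and they will be readily verified in each of the concrete transformations used in Sections \ref{sec: SH} and \ref{sec: PD}). If the map fails to be globally invertible but is piecewise smooth with invertible restrictions, one handles each piece separately and sums the contributions from the pre-images, giving the usual extension of the formula. Since the lemma is merely a restatement of the standard result in \cite[Section 3.3]{durrett1994}, no further argument beyond the change of variables theorem is required.
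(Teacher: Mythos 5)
Your argument is the standard and correct derivation of this change-of-variables formula: compute $P((Y_1,Y_2)\in B)$ once via the density of $(X_1,X_2)$ over the preimage $u^{-1}(B)$ and once via the substitution $x=v(y)$, then identify the integrand as the density of $(Y_1,Y_2)$. Note that the paper itself offers no proof of this lemma --- it is quoted verbatim from \cite[Section 3.3]{durrett1994} as a known result --- so there is no in-paper argument to compare against; your write-up, including the correct observation that one implicitly needs $v$ to be a $C^1$ diffeomorphism with non-vanishing Jacobian (hypotheses that indeed hold for the affine transformations $(\hat T,S)\mapsto(X,Y)$ used later, whose Jacobian is the constant $\tfrac{1}{1-2q}\neq 0$ for $q<\tfrac12$), is exactly the standard textbook proof and is sound.
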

\subsection{Equilibria in Social dilemmas}
By simplifying the right hand side of \eqref{eq: 2-2 games}, equilibria of a social dilemma game are roots in the interval $[
0,1]$ of the following cubic equation 
\begin{align}
\Big(T+S-1\Big)x^3+\Big(1-T-2S+q(S-1-T)\Big)x^2+\Big(S+q(T-S)\Big)x = 0.\label{eq: 2-2 games2}
\end{align}
It follows that $x = 0$ is always an equilibrium. If $q=0$, \eqref{eq: 2-2 games2} reduces to
$$
(T+S-1)x^3+(1-T-2S)x^2+Sx=0,
$$
which has solutions
$$
x=0, \quad x=1, \quad x^*=\frac{S}{S+T-1}.
$$
Note that for SH-games and SH-games $x^*\in (0,1)$, thus it is always an equilibrium. On the other hand, for PD-games and H-games, $x^* \not\in (0,1)$, thus it is not an equilibrium

If $q=\frac{1}{2}$ then the above equation has two solutions $x_1=\frac{1}{2}$ and $x_2=\frac{T+S}{T+S-1}$. In PD, SD and H games, $x_2\not \in (0,1)$, thus they have two equilibria $x_0=0$ and $x_1=\frac{1}{2}$. In the SH game: if $T+S<0$ then the game has three equilibria $x_0=0, x_1=\frac{1}{2}$ and $0<x_2<1$; if $T+S\geq 0$ then the game has only two equilibria $x_0=0, x_1=\frac{1}{2}$.

Now we consider the case $0<q<\frac{1}{2}$. For non-zero equilibrium points we solve the following quadratic equation
\begin{equation}
\label{eq: social dilemma}
h(x):=(T+S-1)x^2+(1-T-2S+q(S-1-T))x+S+q(T-S)=:ax^2+bx+c=0.
\end{equation}
Set $t:=\frac{x}{1-x}$, then we have 
$$
\frac{h(x)}{(1-x)^2}=(a+b+c) t^2+(b+2c)t+c=-qt^2+(-q-a+c)t+c:=g(t)
$$
Therefore, a social dilemma has three equilibria  iff $h$ has two distinct roots in $(0,1)$, which is equivalent to  $g$ having  two distinct positive roots, or the following conditions must hold 
\begin{equation}
\label{eq: condition}
\Delta=(q+a-c)^2+4qc>0,\quad q+a-c<0,\quad c<0.
\end{equation}
Let $\hat{T}:=T-1$ and
$$
X(\hT, S):=q+a-c=(1-q)(T-1)+qS=(1-q)\hat{T}+qS,\quad Y(\hT,S) :=c-q=q(T-1)+(1-q)=q\hat{T}+(1-q)S.
$$
The inverse transformation $(X,Y)\rightarrow (\hT,S)$ is given by
\begin{equation}
\hT(X,Y)=\frac{(1-q)X-qY}{1-2q},\quad S(X,Y)=\frac{(1-q)Y-qX}{1-2q}.
\end{equation}
Condition \eqref{eq: condition} is given by
$$
D=\{(X,Y): X^2+4qY>-4q^2,\quad X<0, \quad Y<-q\}=\{(X,Y): X<0, -\frac{X^2}{4q}-q<Y<-q\}.
$$
The domain $D$ is illustrated in Figure \ref{fig: D}.
\begin{figure}
\begin{center}
\includegraphics[width=0.5\linewidth]{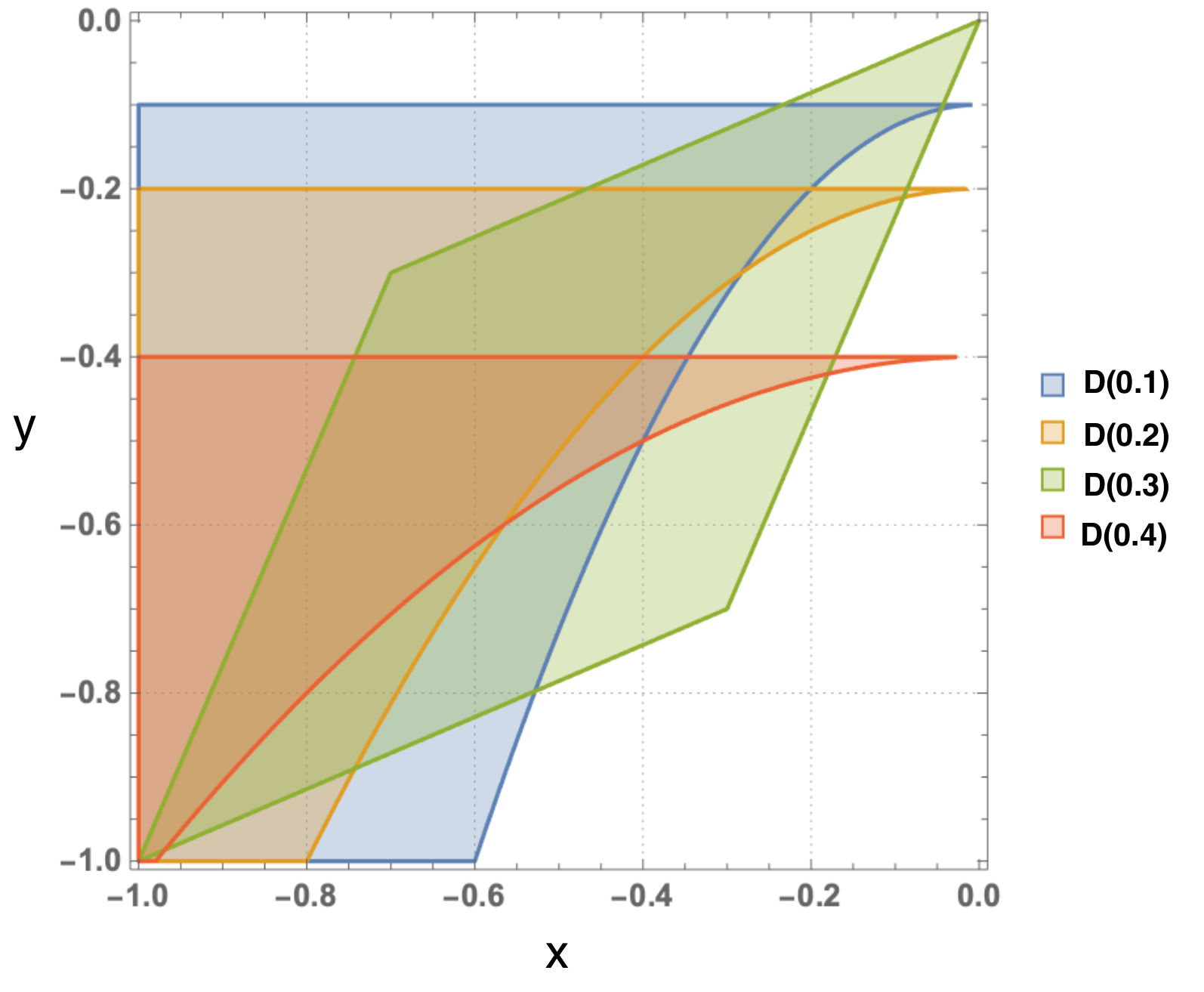}
\caption{Region $D$ are shown for different values of $q$, namely, $q = 0.1, \ 0.2, \ 0.3$ and 0.4.  }
\label{fig: D}
\end{center}
\end{figure}
We apply Lemma \ref{lem: joint pdf} to find the joint distribution of $X$ and $Y$. We compute the Jacobian of the transform $(X,Y)\rightarrow (\hat{T},S)$, which is given by
\begin{equation}
J=\begin{pmatrix}
\frac{\partial \hT(X,Y)}{\partial X}&\frac{\partial S(X,Y)}{\partial Y}\\
\frac{\partial \hT(X,Y)}{\partial X}&\frac{\partial S(X,Y)}{\partial Y}
\end{pmatrix}=\begin{pmatrix}
\frac{1-q}{1-2q}& \frac{-q}{1-2q}\\
\frac{-q}{1-2q}&\frac{1-q}{1-2q}
\end{pmatrix}=\frac{1}{1-2q}.
\end{equation}
Hence if $(\hT, S)$ has a probability density $f(t,s)$ then $(X,Y)$ has a probability density
\begin{equation}
\label{eq: joint pdf of XY}
g(x,y)=|J|f(\hT(x,y),S(x,y)).
\end{equation}
We now apply this approach to SH and PD games.
\subsection{The Stag Hunt (SH) }
\label{sec: SH}
\begin{proposition}
\label{prop: p3areaSH}
The probability that SH games have $3$ equilibria is given by
\begin{equation}
p^{SH}_3=\frac{1}{(1-2q)}\mathrm{Area}(D\cap D_1),
\end{equation}
where $D$ is the subset of $\mathbb{R}^2$ determined by
$$
D=\{(x,y): -1<x<0, -\frac{x^2}{4q}-q<y<-q\},
$$
and $D_1$ is the quadrilateral $ABOC$ with vertices
$$
A=(-1,-1), \quad B=(-(1-q), -q), \quad C=(-q,-(1-q))\quad O=(0,0).
$$
\end{proposition}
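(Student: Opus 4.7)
The plan is to realize $p_3^{SH}$ as an integral of the joint density of $(X,Y)$ over the region $D$ identified in the preceding discussion, and then reduce this integral to an area computation by exploiting the fact that the joint density is constant on its support. First, for SH games the hypothesis $T\sim U(0,1)$ and $S\sim U(-1,0)$ implies that $(\hT,S)=(T-1,S)$ is uniformly distributed on the square $[-1,0]^2$ with joint density $f(t,s)=\mathbb{1}_{[-1,0]^2}(t,s)$.

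Next, Lemma \ref{lem: joint pdf}, applied via formula \eqref{eq: joint pdf of XY} and the Jacobian $|J|=\frac{1}{1-2q}$ computed above, yields that the joint density of $(X,Y)$ is $g(x,y)=\frac{1}{1-2q}$ on the image $D_1$ of $[-1,0]^2$ under the linear map $(\hT,S)\mapsto (X,Y)$, and $0$ elsewhere. Since this map is linear and invertible for $0<q<1/2$, the image $D_1$ is a parallelogram whose vertices are the images of the four corners of the unit square. A direct substitution into $X(\hT,S)=(1-q)\hT+qS$ and $Y(\hT,S)=q\hT+(1-q)S$ gives
\begin{align*}
(0,0)&\mapsto (0,0)=O,\\
(-1,0)&\mapsto (-(1-q),-q)=B,\\
(0,-1)&\mapsto (-q,-(1-q))=C,\\
(-1,-1)&\mapsto (-1,-1)=A,
\end{align*}
so that $D_1$ coincides with the quadrilateral $ABOC$ in the statement. (One may also verify that $A+O=B+C=(-1,-1)$, confirming that $ABOC$ is indeed a parallelogram.)

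Finally, since the condition \eqref{eq: condition} for the cubic to have three equilibria is exactly $(X,Y)\in D$, we obtain
$$
p_3^{SH}=\Pr\bigl((X,Y)\in D\bigr)=\int_{D\cap D_1} g(x,y)\,dx\,dy=\frac{1}{1-2q}\,\mathrm{Area}(D\cap D_1),
$$
which is the claimed formula. All ingredients are already in place: the only content is the identification of the four vertices of $D_1$ and the observation that $g$ is constant on $D_1$. Consequently I do not anticipate any real obstacle at the level of this proposition itself; the genuine difficulty is deferred to the subsequent explicit evaluation of $\mathrm{Area}(D\cap D_1)$, where the boundary of $D$ (the parabolic arc $y=-x^2/(4q)-q$) intersects the sides of $D_1$ in configurations that depend on $q$, forcing a case split (notably at the threshold $q=4/9$) that will carry the technical weight of Theorem \ref{thm: SH}.
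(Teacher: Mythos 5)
Your proposal is correct and follows essentially the same route as the paper: apply Lemma \ref{lem: joint pdf} to get the constant density $\tfrac{1}{1-2q}$ of $(X,Y)$ on $D_1$, identify $D_1$, and integrate the indicator over $D$ to get $\tfrac{1}{1-2q}\,\mathrm{Area}(D\cap D_1)$. The only difference is cosmetic but in your favour: you identify $D_1$ as the parallelogram $ABOC$ by mapping the four corners of $[-1,0]^2$ under the invertible linear map (checking $A+O=B+C$), whereas the paper reaches the same quadrilateral by explicitly ordering the four linear bounds on $y$ over three subintervals of $x$.
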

The domain $D_1$ and the intersection $D\cap D_1$ is illustrated in Figure \ref{fig: D1}.
\begin{figure}
\begin{center}
\includegraphics[width=\linewidth]{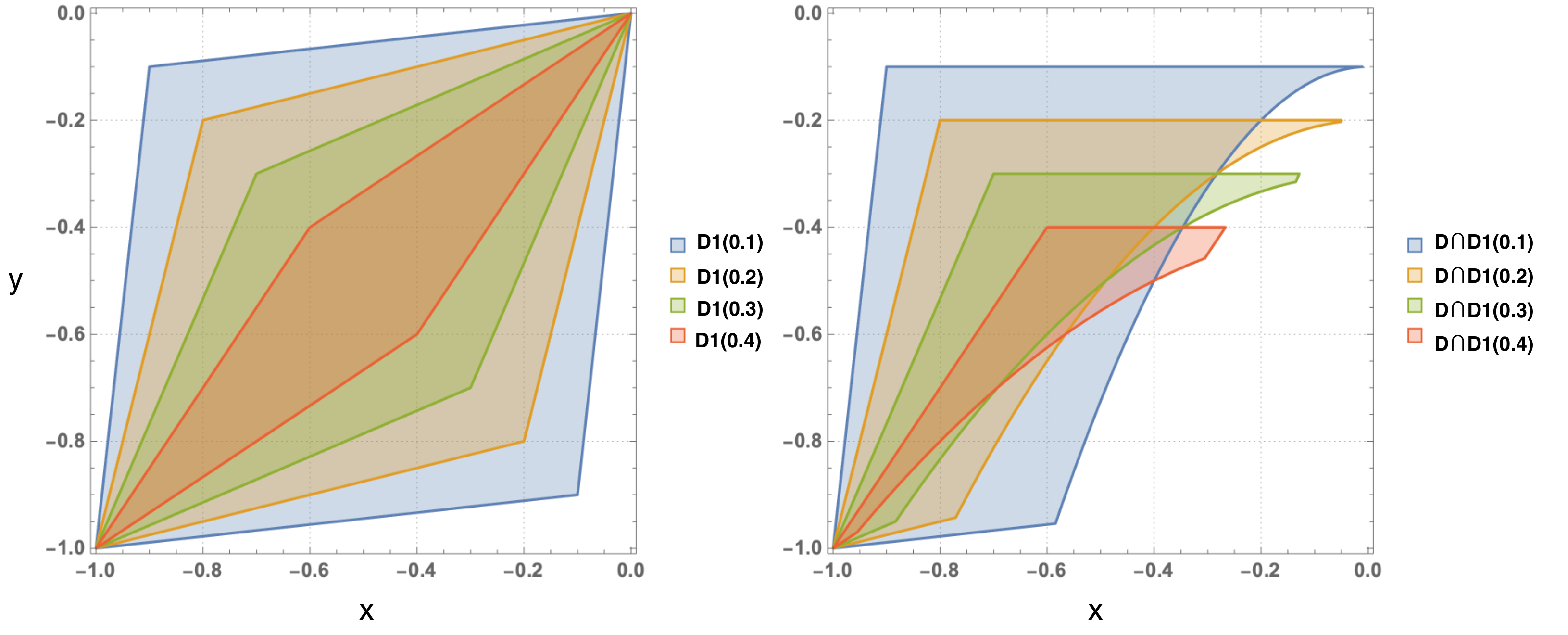}
\caption{SH game: Regions $D_1$ and $D \cap D_1$ are shown for different values of $q$, namely, $q = 0.1, \ 0.2, \ 0.3$ and 0.4.  }
\label{fig: D1}
\end{center}
\end{figure}

\begin{proof} 
In the SH game: $1>T>0>S>-1$, $T\sim U(0,1)$, $S\sim U(-1,0)$. Let $\hat{T}:=T-1\sim U(-1,0)$. The joint distribution of $(\hat{T},S)$ is
$$
f(t,s)=\begin{cases}
1\quad (t,s)\in (-1,0)\times (-1,0),\\
0\quad\text{otherwise}.
\end{cases}
$$
According to \eqref{eq: joint pdf of XY}, the joint probability distribution of $(X,Y)$ is 
\begin{equation*}
g(x,y)=|J|f(\hT(x,y),S(x,y))=\frac{1}{(1-2q)}1_{(x,y)\in D_1},
\end{equation*}
where 
\begin{align*}
D_1&=(\hT(x,y),S(x,y)\in (-1,0)\times (-1,0)
\\&=((x,y)\in (-1,0)^2:~(1-q)x-qy,(1-q)y-qx)\in (-(1-2q),0)\times (-(1-2q),0)
\\&=\{(x,y)\in (-1,0)^2:~-(1-2q)<(1-q)x-qy<0, -(1-2q)<(1-q)y-qx<0 \}
\\&=\{(x,y)\in (-1,0)^2:~\frac{(1-q)x}{q}<y<\frac{(1-q)x+(1-2q)}{q}, \frac{qx-(1-2q)}{1-q}<y<\frac{qx}{1-q} \}.
\end{align*}
We can characterise $D_1$ further by explicitly ordering the lower and upper bounds in the above formula. We have
\begin{align*}
&\frac{(1-q)x}{q}-\frac{qx}{1-q}=\frac{(1-2q)x}{q(1-q)}<0\\
& \frac{(1-q)x}{q}-\frac{qx-(1-2q)}{1-q}=\frac{(1-2q)(x+q)}{q(1-q)},
\\ & \frac{(1-q)x+(1-2q)}{q}-\frac{qx}{1-q}=\frac{(1-2q)(x+1-q)}{q(1-q)}.
\end{align*}
It follows that: 
\begin{enumerate}[(i)]
\item for $-1<x<-(1-q)<-q$:
$$
\frac{(1-q)x}{q}<\frac{qx-(1-2q)}{1-q}<\frac{(1-q)x+(1-2q)}{q}<\frac{qx}{1-q}
$$
\item For $-(1-q)<x<-q$
$$
\frac{(1-q)x}{q}<\frac{qx-(1-2q)}{1-q}<\frac{qx}{1-q}<\frac{(1-q)x+(1-2q)}{q}.
$$
\item for $-q<x<0$
$$
\frac{qx-(1-2q)}{1-q}<\frac{(1-q)x}{q}<\frac{qx}{1-q}<\frac{(1-q)x+(1-2q)}{q}.
$$
\end{enumerate}
Hence 
\begin{align*}
D_1&=\{-1<x<-(1-q),~ \frac{qx-(1-2q)}{1-q}<y<\frac{(1-q)x+(1-2q)}{q}\}
\\&\qquad~\cup~ \{-(1-q)<x<-q,~ \frac{qx-(1-2q)}{1-q}<y<\frac{qx}{1-q}\}
\\&\qquad~\cup~\{-q<x<-0,~ \frac{(1-q)x}{q}<y<\frac{qx}{1-q}\}.
\end{align*}
Thus, $D_1$ is the quadrilateral $ABOC$ with vertices
$$
A=(-1,-1), \quad B=(-(1-q), -q), \quad C=(-q,-(1-q))\quad O=(0,0).
$$
\end{proof}
\begin{theorem}
\label{thm: SH}
 The probability that SH games have $3$ equilibria is given by
\begin{equation*}
p_3^{SH}=
\begin{cases}
1-\frac{q}{2(1-q)}-\frac{1}{1-2q}\Big[\frac{(3\sqrt{q}+2)^2\sqrt{q}(5q^{3/2}+3q^2-9q-3\sqrt{q}+4)}{12(\sqrt{q}+1)^3}
\\\hspace*{4cm}+\frac{-27 q^3-18q^2-32\sqrt{1-2q}q+48q+16\sqrt{1-2q}-16}{12q}\Big],\quad 0<q\leq 4/9\\ \\
1-\frac{q}{2(1-q)}-\frac{8\sqrt{q}(1-2q)^2}{3(1-q)^3},\quad 4/9<q<0.5.
\end{cases}
\end{equation*}
\end{theorem}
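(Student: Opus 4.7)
The plan is to reduce $p_{3}^{SH}$ to a plane-geometry problem via Proposition~\ref{prop: p3areaSH} and then compute the relevant area in two cases, separated by the transition value $q=4/9$. Since $x=0$ is always an equilibrium, the SH game has $1$, $2$, or $3$ equilibria and so $p_{1}^{SH}+p_{2}^{SH}+p_{3}^{SH}=1$. Using $p_{2}^{SH}=\tfrac{q}{2(1-q)}$ from Theorem~\ref{thm: previous}, it is more convenient to compute $p_{1}^{SH}$ and then set $p_{3}^{SH}=1-p_{2}^{SH}-p_{1}^{SH}$. A direct check from \eqref{eq: condition}, using $D_{1}\subset\{X\le0\}$, shows that within $D_{1}$ the three events ``$Y>-q$'', ``$-\tfrac{X^{2}}{4q}-q<Y<-q$'', and ``$Y<-\tfrac{X^{2}}{4q}-q$'' partition the probability and correspond respectively to $p_{2}^{SH}$, $p_{3}^{SH}$, $p_{1}^{SH}$. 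The first slab is the triangle $BOE$, where $E:=(-q^{2}/(1-q),-q)$ is the second intersection of the line $y=-q$ with $D_{1}$; its shoelace area $\tfrac{q(1-2q)}{2(1-q)}$ reproduces $p_{2}^{SH}$ after dividing by $1-2q$. It remains to compute the area of the ``below-$\Pi$'' subregion of the quadrilateral $Q_{q}:=ABEC$ cut off below $y=-q$, where $\Pi$ denotes the parabola $y=-\tfrac{x^{2}}{4q}-q$.

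The geometric set-up is then obtained by intersecting $\Pi$ with the two slanted sides of $Q_{q}$, which lie on the lines $OC:\,y=\tfrac{1-q}{q}x$ and $CA:\,y=\tfrac{q}{1-q}x-\tfrac{1-2q}{1-q}$. Each substitution yields a quadratic in $x$ whose real roots are
\[
x_{1}=-2(1-q)+2\sqrt{1-2q}\quad(\text{on line }OC),\qquad x_{\pm}=\frac{-2q^{2}\pm 2(1-2q)\sqrt{q}}{1-q}\quad(\text{on line }CA).
\]
Since $\Pi(-q)=-\tfrac{5q}{4}$ equals $C_{y}=-(1-q)$ iff $q=4/9$, this is the critical value at which $\Pi$ passes through the vertex $C$. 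Routine monotonicity checks then give the two geometric regimes: for $0<q\le 4/9$, the root $x_{1}$ lies on edge $EC$ and $x_{-}$ lies on edge $CA$, while $x_{+}$ is outside $Q_{q}$; for $4/9<q<1/2$, both $x_{\pm}$ lie on edge $CA$ and $x_{1}$ is outside $Q_{q}$.

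In Case~1 ($0<q\le 4/9$), $\Pi$ enters $Q_{q}$ through $EC$ at $x_{1}$, passes above the vertex $C$, and exits through $CA$ at $x_{-}$, which gives
\[
p_{1}^{SH}=\frac{1}{1-2q}\Bigl[\int_{x_{-}}^{-q}\bigl(y_{\Pi}(x)-y_{CA}(x)\bigr)\,dx+\int_{-q}^{x_{1}}\bigl(y_{\Pi}(x)-y_{EC}(x)\bigr)\,dx\Bigr].
\]
In Case~2 ($4/9<q<1/2$), $\Pi$ enters and exits $Q_{q}$ through the same edge $CA$, so the ``below-$\Pi$'' region inside $Q_{q}$ is a single lens; the standard identity $\int_{x_{-}}^{x_{+}}(-\tfrac{1}{4q})(x-x_{-})(x-x_{+})\,dx=\tfrac{(x_{+}-x_{-})^{3}}{24q}$ combined with $x_{+}-x_{-}=\tfrac{4(1-2q)\sqrt{q}}{1-q}$ yields the closed form $p_{1}^{SH}=\tfrac{8\sqrt{q}(1-2q)^{2}}{3(1-q)^{3}}$, which upon insertion into $p_{3}^{SH}=1-p_{2}^{SH}-p_{1}^{SH}$ gives the second branch of the claim.

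The main obstacle is the algebraic simplification in Case~1. Each definite integral is individually elementary, but the evaluation at $x_{-}$---which can be rewritten as $x_{-}=\tfrac{2\sqrt{q}(q-\sqrt{q}-1)}{1+\sqrt{q}}$ using $1-q=(1-\sqrt{q})(1+\sqrt{q})$ together with the factorisation $q^{3/2}+1-2q=(\sqrt{q}-1)(q-\sqrt{q}-1)$---mixes powers of $q$ and $\sqrt{q}$, while the $x_{1}$-evaluation contributes the $\sqrt{1-2q}$ terms. Collecting these contributions into the compact form with denominator $12(\sqrt{q}+1)^{3}$ on the first summand and all $\sqrt{1-2q}$ terms gathered into the second summand of the bracketed expression requires careful bookkeeping, and is where most of the work lies.
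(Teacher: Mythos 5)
Your proposal is correct and takes essentially the same route as the paper's proof: starting from Proposition~\ref{prop: p3areaSH}, it identifies the same intersection points of the parabola with the edges of $D_1$ (your $x_1$, $x_\pm$ are the paper's $F$, $G_1$, $G_2$), splits at the same critical value $q=4/9$, and subtracts the same two pieces (the triangle $BOE$ giving $p_2$ and the below-parabola region giving $p_1$), so writing $p_3^{SH}=1-p_2^{SH}-p_1^{SH}$ is just the paper's identity $\mathrm{Area}(D\cap D_1)=\mathrm{Area}(D_1)-\mathrm{Area}(BOE)-\mathrm{Area}(\text{below-}\Pi)$ divided by $1-2q$. A minor point in your favour: your integrands $y_\Pi-y_{CA}$ and $y_\Pi-y_{EC}$ carry the correct sign (the displayed integrands in \eqref{area3} are written with the opposite sign, though the evaluated expressions there are the positive areas), and your lens identity reproduces \eqref{area4} exactly.
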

\begin{proof}
According to Proposition \ref{prop: p3areaSH}, in order  to compute $p_3^{SH}$ we need to compute the area of $D\cap D_1$.  To this end we need to understand the intersections of the parabola $y=-\frac{x^2}{4q}-q$ with the lines $y=\frac{(1-q)x}{q}$ and $y=\frac{qx-(1-2q)}{1-q}$. The parabola $y=-\frac{x^2}{4q}-q$ always intersects with the line $y=\frac{(1-q)x}{q}$ inside the domain $(-1,0)^2$ at the point
$$
F=(2(q+\sqrt{1-2q}-1), 2(1-q)(q+\sqrt{1-2q}-1)/q).
$$
By comparing $2(q+\sqrt{1-2q}-1)$ with $-q$, it follows that the point $F$ is inside the edge $OC$ if $0\leq q\leq 4/9$ and is outside $OC$ whenever $4/9<q<0.5$. On the other hand, the parabola $y=-\frac{x^2}{4q}-q$ meets the line  $y=\frac{qx-(1-2q)}{1-q}$ at two points  $G_1=(x_1, -\frac{x_1^2}{4q}-q)$ and $G_2=(x_2, -\frac{x_2^2}{4q}-q)$ with
$$
x_1=2q \Big(-\frac{q}{1-q}-\frac{2q-1}{(q-1)\sqrt{q}}\Big),\quad x_2=2q\Big(\frac{2q-1}{(q-1)\sqrt{q}}-\frac{q}{1-q}\Big).
$$
By comparing $x_1$ and $x_2$ with $-q$ we have $G_1$ is always in the edge $AC$, while $G_2$ is outside $AC$ if $0< q\leq 4/9$ and is inside $AC$ if $4/9<q<0.5$.

In conclusion
\begin{enumerate}
\item \textbf{for $0<q\leq 4/9$}: the intersection $D\cap D_1$ is the domain formed by vertices $A$, $B$, $E$, $F$, and $G_1$ where $E=(-\frac{q^2}{1-q}, -q)$ (which is the intersection of $y=-q$ with $y=\frac{(1-q)x}{q}$) and
\begin{itemize}
\item $A$ and $B$ are connected by the line $y=\frac{(1-q)x+(1-2q)}{q}$,
\item $B$ and $E$ are connected by the line $y=-q$,
\item $E$ and $F$ are connected by the line $y=\frac{(1-q)x}{q}$,
\item $F$ and $G_1$ are connected by the parabola $y=-\frac{x^2}{4q}-q$,
\item $G_1$ and $A$ are connected by the line $y=\frac{qx-(1-2q)}{1-q}$.
\end{itemize}
\item \textbf{for $4/9<q<0.5$}: the intersection $D\cap D_1$ is the domain formed by the vertices $A, B, E, C, G_1, G_2$ where
\begin{itemize}
\item $A$ and $B$ are connected by the line $y=\frac{(1-q)x+(1-2q)}{q}$,
\item $B$ and $E$ are connected by the line $y=-q$,
\item $E$ and $C$ are connected by the line $y=\frac{(1-q)x}{q}$,
\item $C$ and $G_1$ are connected by the line $y=\frac{qx-(1-2q)}{1-q}$,
\item $G_1$ and $G_2$ are connected by the parabola $y=-\frac{x^2}{4q}-q$,
\item $G_2$ and $A$ are connected by the line $y=\frac{qx-(1-2q)}{1-q}$.
\end{itemize}
\end{enumerate}
See Figure \ref{fig: SH} for illustrations of the two cases above, with $q=0.3$ and $q=0.45$ respectively.
\begin{figure}
\begin{center}
\includegraphics[width=\linewidth]{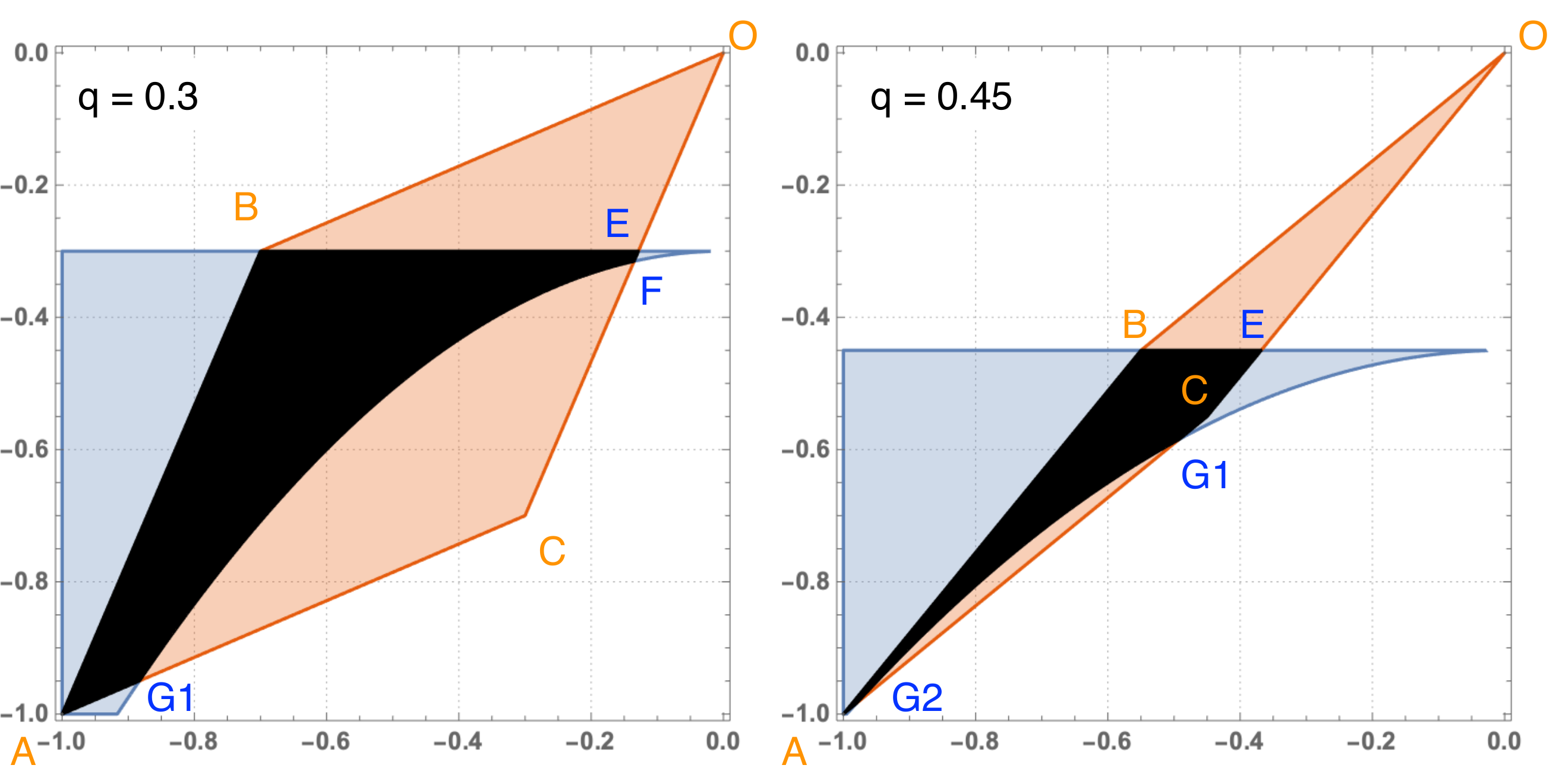}
\caption{Details of $D\cap D_1$ for $q=0.3$ and $q=0.45$}
\label{fig: SH}
\end{center}
\end{figure}
We are now ready to compute the probability that the SH game has three equilibria.

\textbf{For $0<q\leq 4/9$}: The probability that the SH game has three equilibria is
\begin{align}
p^{SH}_3&=\frac{1}{(1-2q)}\mathrm{Area}(D\cap D_1)\notag
\\&=\frac{1}{(1-2q)}\Big[\mathrm{Area}(D_1)-\mathrm{Area}(BOE)-\mathrm{Area}(CFG_1)\Big]\label{area0}
\end{align}
We proceed by computing the areas in the  expression above. Area of $D_1$ is
\begin{align}
\mathrm{Area}(D_1)&=\int_{-1}^{-(1-q)}\Big[\frac{(1-q)x+(1-2q)}{q}-\frac{qx-(1-2q)}{1-q}\Big]\,dx+\int_{-(1-q)}^{-q}\Big[\frac{qx}{1-q}-\frac{qx-(1-2q)}{1-q}\Big]\,dx\notag
\\&\qquad+\int_{-q}^0\Big[\frac{qx}{1-q}-\frac{(1-q)x}{q}\Big]\,dx\notag
\\&=\frac{q(1-2q)}{2(1-q)}+\frac{(1-2q)^2}{1-q}+\frac{q(1-2q)}{2(1-q)}\notag
\\&=1-2q.\label{area1}
\end{align}
Area of $BOE$ is
\begin{align}
\label{area2}
\mathrm{Area}(BOE)=\mathrm{Area}(BOH)-\mathrm{Area}(HOE)=\frac{1}{2}q\Big[(1-q)-\frac{q^2}{1-q}\Big]=\frac{q(1-2q)}{2(1-q)},
\end{align}
where $H=(0,-q)$. Area of $CFG$ is
\begin{align}
\mathrm{Area}(CFG_1)&=\int_{a}^{-q}\Big[\frac{qx-(1-2q)}{1-q}+\frac{x^2}{4q}+q\Big]\,dx+\int_{-q}^{2(q+\sqrt{1-2q}-1)}\Big[\frac{(1-q)x}{q}+\frac{x^2}{4q}+q\Big]\,dx\notag
\\&=\frac{(3\sqrt{q}+2)^2\sqrt{q}(5q^{3/2}+3q^2-9q-3\sqrt{q}+4)}{12(\sqrt{q}+1)^3}\notag
\\&\qquad+\frac{-27 q^3-18q^2-32\sqrt{1-2q}q+48q+16\sqrt{1-2q}-16}{12q}.\label{area3}
\end{align}
Substituting \eqref{area1}, \eqref{area2}, and \eqref{area3} back to \eqref{area0} we obtain, for $0<q\leq 4/9$
\begin{align*}
p_3^{SH}&=1-\frac{q}{2(1-q)}
\\&\qquad-\frac{1}{1-2q}\Bigg[\frac{(3\sqrt{q}+2)^2\sqrt{q}(5q^{3/2}+3q^2-9q-3\sqrt{q}+4)}{12(\sqrt{q}+1)^3}
\\&\qquad\qquad+\frac{-27 q^3-18q^2-32\sqrt{1-2q}q+48q+16\sqrt{1-2q}-16}{12q}\Bigg]
\end{align*}
Now we consider the remaining case $4/9<q<0.5$. In this case
\begin{align}
p^{SH}_3&=\frac{1}{(1-2q)}\mathrm{Area}(D\cap D_1)\notag
\\&=\frac{1}{(1-2q)}\Big[\mathrm{Area}(D_1)-\mathrm{Area}(BOE)-\mathrm{Area}(\lozenge G_1G_2)\Big]\label{area02},
\end{align}
where $\lozenge G_1G_2$ is the domain with vertices $G_1$ and $G_2$ formed by the parabola $y=-\frac{x^2}{4q}-q$ and the line $y=\frac{qx-(1-2q)}{1-q}$. Thus
\begin{align}
\label{area4}
\mathrm{Area}(\lozenge G_1G_2)&=\int_{x_1}^{x_2}\Big[-\frac{x^2}{4q}-q-\frac{qx-(1-2q)}{1-q}\Big]\,dx\notag
\\&=\frac{8\sqrt{q}(1-2q)^3}{3(1-q)^3}.
\end{align}
Substituting \eqref{area1}, \eqref{area2} and \eqref{area4} back to \eqref{area02} we obtain, for $4/9<q<0.5$,
\begin{equation*}
p_3^{SH}=1-\frac{q}{2(1-q)}-\frac{8\sqrt{q}(1-2q)^2}{3(1-q)^3}.
\end{equation*}
This finishes the proof of this theorem
\end{proof}
\subsection{Prisoner's Dilemma (PD)}
\label{sec: PD}
\begin{proposition}
\label{prop: p3PDarea}
The probability that PD games have $3$ equilibria is given by
\begin{equation}
p^{PD}_3=\frac{1}{(1-2q)}\mathrm{Area}(D\cap D_2),
\end{equation}
where $D$ is defined above (as in the case of the SH games) and $D_2$ is the triangle \textit{MNO} with vertices
$$
M=(-q,-(1-q)),\quad N=(0,-\frac{1-2q}{1-q}), \quad O=(0,0).
$$
\end{proposition}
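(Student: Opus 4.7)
The plan is to mirror the proof of Proposition \ref{prop: p3areaSH} for the SH game, replacing the SH payoff distribution with the PD one. In PD games, $T \sim U(1,2)$ and $S \sim U(-1,0)$; setting $\hT := T-1 \sim U(0,1)$, the pair $(\hT, S)$ is uniformly distributed on the rectangle $(0,1) \times (-1, 0)$, so its joint density $f(t,s)$ equals the indicator of that rectangle. Applying the change of variables $(X,Y) \mapsto (\hT,S)$ via \eqref{eq: joint pdf of XY} with the already computed Jacobian factor $|J| = 1/(1-2q)$, the joint density of $(X,Y)$ is $g(x,y) = \frac{1}{1-2q}$ on the image of this rectangle under the linear map and zero elsewhere.

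The probability $p_3^{PD}$ equals the integral of $g$ over $D$, which reduces to $\frac{1}{1-2q}$ times the area of the intersection of $D$ with the support of $(X,Y)$. Since $D \subseteq \{X < 0\}$, only the part of the support with non-positive $X$-coordinate can contribute. The support is a parallelogram whose vertices are the images of the corners of the rectangle; two of these images lie on the correct side of the line $X=0$, namely
\[
(0,0) \mapsto O = (0,0), \qquad (0,-1) \mapsto M = (-q, -(1-q)),
\]
while the images of $(1, 0)$ and $(1, -1)$ both have $X > 0$ for all $0 < q < 1/2$.

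The key geometric step is to locate where the vertical line $X = 0$ cuts the parallelogram. Parametrizing the edge $S = -1$ as $X = (1-q)\hT - q$, the equation $X = 0$ gives $\hT = q/(1-q) \in (0,1)$ (valid since $q < 1/2$), and the corresponding image point is exactly $N = (0, -(1-2q)/(1-q))$. The two edges $\hT = 1$ and $S = 0$ lie entirely in $\{X \geq 0\}$, while the edge $\hT = 0$ coincides with the segment $OM \subset \{X \leq 0\}$. Consequently, the part of the support with $X \leq 0$ is exactly the triangle $MNO$, which is the claimed $D_2$, and the formula $p_3^{PD} = \frac{1}{1-2q}\, \mathrm{Area}(D \cap D_2)$ follows.

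The main obstacle is this boundary analysis: one must check that for every $q \in (0, 1/2)$ the line $X=0$ enters the parallelogram at a corner ($O$) and exits through the bottom edge $S=-1$, so that the $\{X \leq 0\}$ portion is genuinely a triangle rather than a quadrilateral or other polygon (as happens for SH, where a similar-looking setup yields the quadrilateral $ABOC$). Once this geometric fact is pinned down, the change-of-variables computation inherited from the SH argument concludes the proof directly.
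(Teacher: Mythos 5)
Your proposal is correct and reaches the conclusion by essentially the same route as the paper: compute the density of $(X,Y)$ via the change-of-variables lemma with Jacobian factor $\frac{1}{1-2q}$, and reduce $p_3^{PD}$ to $\frac{1}{1-2q}$ times the area of $D$ intersected with the support of $(X,Y)$. The one step you execute differently is the identification of $D_2$: the paper rewrites the support as a system of four linear inequalities in $y$ and orders the bounds separately on $-1<x<-q$ (where the system is empty) and $-q<x<0$ (where it gives the strip between $y=\frac{qx-(1-2q)}{1-q}$ and $y=\frac{(1-q)x}{q}$), whereas you track the images of the corners and edges of the rectangle $(0,1)\times(-1,0)$ under the linear map and cut the resulting parallelogram with the line $X=0$, locating $N$ as the crossing point on the image of the edge $S=-1$. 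Your geometric version is, if anything, slightly more careful on one point: the full support is a parallelogram protruding into $\{X>0\}$, and you justify discarding that part by noting $D\subseteq\{X<0\}$, whereas the paper handles this only implicitly by restricting to $(-1,0)^2$ at the outset. Both arguments are elementary and complete; there is no gap in yours.
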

See Figure \ref{fig: D2} for illustration of $D_2$ and its intersection with $D$ for several values of $q$.
\begin{figure}
\begin{center}
\includegraphics[width=\linewidth]{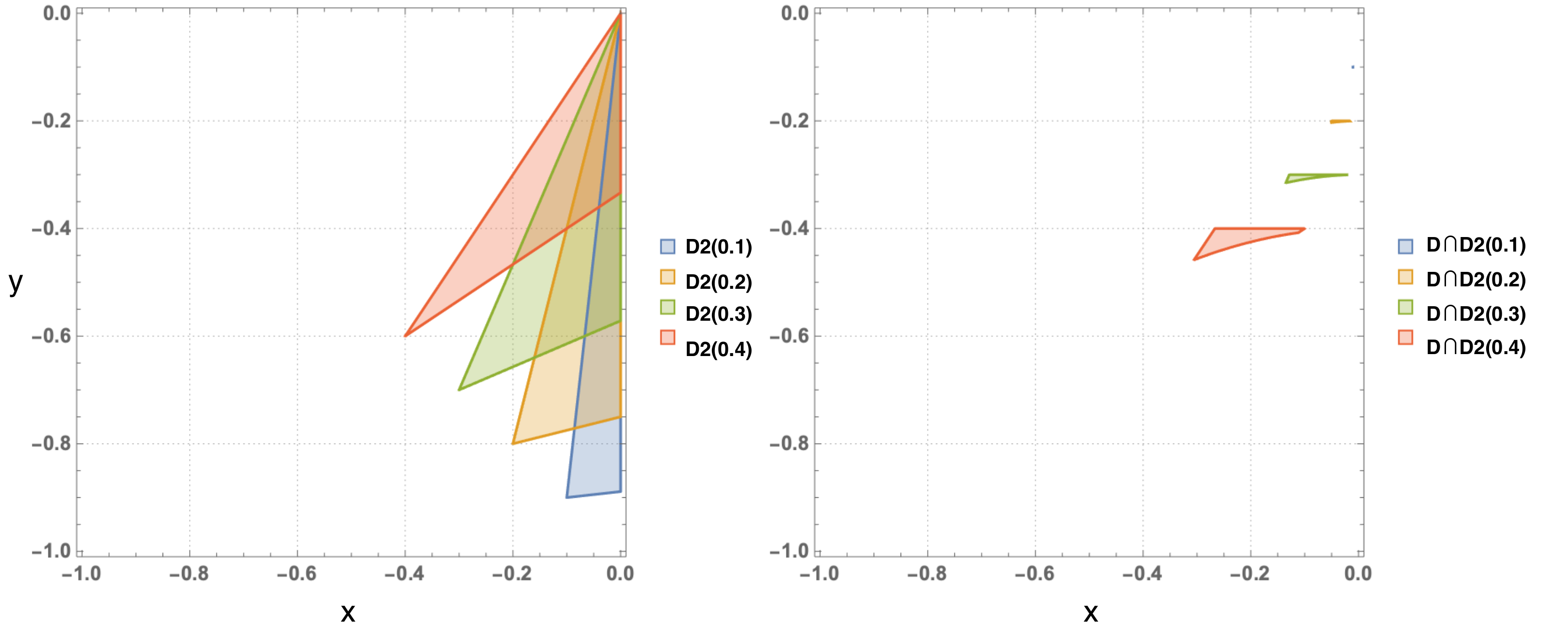}
\caption{PD game: Regions $D_2$ and $D \cap D_2$  are shown  for different values of $q$, namely, $q = 0.1, \ 0.2, \ 0.3$ and 0.4.  }
\label{fig: D2}
\end{center}
\end{figure}

\begin{proof}
Recall that in PD games we have $T\sim U(1,2)$, $S\sim (-1,0)$. Thus $\hT=T-1\sim (0,1)$. The joint distribution of $(\hT, S)$ is
$$
f(t,s)=1_{(x,y)\in (0,1)\times (-1,0)}=\begin{cases}
1\quad (t,s)\in (0,1)\times (-1,0),\\
0\quad \text{otherwise}.
\end{cases}.
$$
According to \eqref{eq: joint pdf of XY}, the joint probability distribution of $(X,Y)$ is 
\begin{equation*}
g(x,y)=|J|f(\hT(x,y),S(x,y))=\frac{1}{(1-2q)}1_{(x,y)\in D_2},
\end{equation*}
where 
\begin{align*}
D_2&=(\hT(x,y),S(x,y)\in (0,1)\times (-1,0)
\\&=\{(x,y)\in (-1,0)^2:~(1-q)x-qy,(1-q)y-qx)\in (0,1-2q)\times (-(1-2q),0)\}
\\&=\{(x,y)\in (-1,0)^2:~0<(1-q)x-qy<1-2q, -(1-2q)<(1-q)y-qx<0 \}
\\&=\{(x,y)\in (-1,0)^2:\frac{(1-q)x-(1-2q)}{q} <y<\frac{(1-q)x}{q}, \frac{qx-(1-2q)}{1-q}<y<\frac{qx}{1-q} \}.
\end{align*}
We now characterise $D_2$ further. We have
\begin{enumerate}[(1)]
\item for $-1<x<-q$ then
$$
\frac{(1-q)x-(1-2q)}{q}<\frac{(1-q)x}{q}<\frac{qx-(1-2q)}{1-q}<\frac{qx}{1-q}
$$
\item for $-q<x<0$ then
$$
\frac{(1-q)x-(1-2q)}{q}<\frac{qx-(1-2q)}{1-q}<\frac{(1-q)x}{q}<\frac{qx}{1-q}
$$
\end{enumerate}
It follows that
$$
D_2=\{(x,y):~-q<x<0,~ \frac{qx-(1-2q)}{1-q}<y<\frac{(1-q)x}{q}\}.
$$
Thus $D_2$ is the triangle  \textit{MNO} with vertices
$$
M=(-q,-(1-q)),\quad N=(0,-\frac{1-2q}{1-q}), \quad O=(0,0).
$$
The probability that the SH game has three equilibria is thus 
\begin{align*}
p^{SH}_3&=\int_{D}g(x,y)\,dxdy\nonumber
\\&=\frac{1}{(1-2q)}\int_{D}1_{(x,y)\in D_2}\,dxdy\nonumber\\
&=\frac{1}{(1-2q)}\mathrm{Area}(D\cap D_2).
\end{align*}
This finishes the proof of this proposition.
\end{proof}
The following elementary lemma provides an upper bound for $p_3^{PD}$, particularly implying that it tends to $0$ as $q$ goes to $0$.
\begin{lemma}
\begin{equation}
p_3^{PD}\leq \frac{q}{2(1-q)}.
\end{equation}
As a consequence, $p_3^{PD}$ is always smaller or equal to $0.5$ and tends to $0$ as $q$ tends to 0.
\end{lemma}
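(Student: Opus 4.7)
The plan is to bound $\mathrm{Area}(D\cap D_2)$ by $\mathrm{Area}(D_2)$ and compute the latter explicitly, which is elementary because $D_2$ is a triangle with two vertices on the $y$-axis.

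First, I would invoke Proposition \ref{prop: p3PDarea} to write
\begin{equation*}
p_3^{PD}=\frac{1}{1-2q}\mathrm{Area}(D\cap D_2)\leq \frac{1}{1-2q}\mathrm{Area}(D_2),
\end{equation*}
which is the obvious monotonicity of area under intersection. This reduces the lemma to an exact computation of $\mathrm{Area}(D_2)$.

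Next, I would use that $D_2$ is the triangle $MNO$ with $M=(-q,-(1-q))$, $N=(0,-\tfrac{1-2q}{1-q})$, $O=(0,0)$. Since $N$ and $O$ both lie on the $y$-axis, I take the segment $ON$ as the base, whose length is $\tfrac{1-2q}{1-q}$, and the height is the horizontal distance from $M$ to the $y$-axis, namely $q$. Hence
\begin{equation*}
\mathrm{Area}(D_2)=\frac{1}{2}\cdot q\cdot \frac{1-2q}{1-q}=\frac{q(1-2q)}{2(1-q)}.
\end{equation*}
Substituting back gives $p_3^{PD}\leq \frac{q}{2(1-q)}$.

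Finally, for the consequences, I would note that the function $q\mapsto \frac{q}{2(1-q)}$ is increasing on $(0,1/2)$ with value $0$ at $q=0$ and limit $1/2$ at $q=1/2$; thus $p_3^{PD}\le 1/2$ on the full range and $p_3^{PD}\to 0$ as $q\to 0$. There is no real obstacle here — the only minor care needed is to check directly from the coordinates of $M,N,O$ that $D_2$ really is a triangle with the stated base and height (equivalently, that $M$ has negative $x$-coordinate and both $N,O$ lie on $\{x=0\}$), which is immediate from the definition.
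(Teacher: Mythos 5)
Your proposal is correct and follows essentially the same route as the paper: both bound $p_3^{PD}$ by $\frac{1}{1-2q}\mathrm{Area}(D_2)$ with $\mathrm{Area}(D_2)=\frac{q(1-2q)}{2(1-q)}$ (you via base times height, the paper via an integral). The paper additionally computes $\mathrm{Area}(D)=\frac{1}{12q}$ and takes the minimum of the two areas, but since that minimum is always attained at $\mathrm{Area}(D_2)$ on $(0,1/2)$, your shortcut loses nothing.
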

\begin{proof}
Area of $D$
\begin{align*}
\mathrm{Area}(D)=\int_{-1}^0\Big[-q-\Big(-\frac{x^2}{4q}-q\Big)\Big]\,dx=\frac{1}{12 q}.
\end{align*}
Area of $D_2$
$$
\mathrm{Area}(D_2)=\int_{-q}^0\Big[\frac{(1-q)x}{q}-\frac{qx-(1-2q)}{1-q}\Big]\,dx=\frac{q(1-2q)}{2(1-q)}.
$$
Hence
\begin{align*}
p_3^{PD}&\leq \frac{1}{1-2q}\min\Big\{\mathrm{Area}(D),\mathrm{Area}(D_2) \Big\}
\\&=\frac{1}{1-2q}\min\Big\{\frac{1}{12 q},\frac{q(1-2q)}{2(1-q)}\Big\}=\frac{q}{2(1-q)}.
\end{align*}
\end{proof}
\begin{theorem}
\label{thm: PD}
The probability that PD games has three equilibria is given by
\begin{equation*}
p_3^{PD}=\begin{cases}
\frac{1}{1-2q}\Big[-\frac{2(q^3+3q^2+(4\sqrt{1-2q}-6)q-2\sqrt{1-2q}+2)}{3q}-\frac{1}{2}\frac{q^3}{(1-q)}\Big],\quad 0\leq q\leq \frac{3-\sqrt{5}}{2},\\
\frac{-16 \left(\sqrt{1-2 q}-1\right) q^{3/2}+2 q^{5/2}+15 q^3+\left(8 \sqrt{1-2 q}-25\right) q^2+q-8 \sqrt{1-2 q}+2 \sqrt{q} \left(8 \sqrt{1-2 q}-5\right)+5}{6 \left(\sqrt{q}-1\right)^3 \left(q^{3/2}+q\right)},\quad \frac{3-\sqrt{5}}{2}<q\leq 4/9,\\
\frac{1}{2}\frac{(1-2q)^2}{q(1-q)},\quad 4/9<q<0.5.
\end{cases}
\end{equation*}
\end{theorem}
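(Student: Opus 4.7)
The plan is to follow the template of Theorem \ref{thm: SH}. By Proposition \ref{prop: p3PDarea} it suffices to compute $\mathrm{Area}(D\cap D_2)$ and divide by $1-2q$. The shape of $D\cap D_2$ is determined by how the lower boundary $y=-x^2/(4q)-q$ and the upper boundary $y=-q$ of $D$ cut the triangle $D_2=MNO$ with $M=(-q,-(1-q))$, $N=(0,-\tfrac{1-2q}{1-q})$ and $O=(0,0)$. Four elementary intersection computations give the relevant critical points: the parabola meets edge $MO$ at $x_F=2(q-1+\sqrt{1-2q})$, lying in $(-q,0)$ iff $q<4/9$ (with $x_F=-q$ exactly when the parabola passes through $M$); the parabola meets edge $MN$ at $x_+=(-2q^2+2\sqrt{q}(1-2q))/(1-q)$, in $(-q,0)$ iff $q>(3-\sqrt{5})/2$ (with $x_+=0$ exactly when it passes through $N$); the line $y=-q$ meets $MO$ at $x_E=-q^2/(1-q)$, always inside; and $y=-q$ meets $MN$ at $x_\star=(q^2-3q+1)/q$, inside iff $q>(3-\sqrt{5})/2$. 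These two critical values carve the parameter range into the three regimes of the theorem.

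For $4/9<q<1/2$, vertex $M$ lies above the parabola, so the parabola contributes no boundary and $D\cap D_2$ reduces to the triangle with vertices $M,(x_E,-q),(x_\star,-q)$, whose base $x_\star-x_E=(1-2q)^2/(q(1-q))$ and height $1-2q$ give area $(1-2q)^3/(2q(1-q))$. For $0<q\le(3-\sqrt{5})/2$, the parabola lies strictly above edge $MN$ on $(-q,0)$ and exits $D_2$ through the point $(0,-q)$ on edge $NO$; since one verifies $x_F<x_E$, the area splits as
\begin{equation*}
\int_{x_F}^{x_E}\Bigl[\tfrac{(1-q)x}{q}+\tfrac{x^2}{4q}+q\Bigr]dx+\int_{x_E}^{0}\tfrac{x^2}{4q}\,dx.
\end{equation*}
Combining antiderivatives, the $x_E$-terms telescope (using $x_E=-q^2/(1-q)$) to $-q^3/(2(1-q))$, the second term of the theorem's formula; the $x_F$-terms, after expanding $\alpha:=q-1+\sqrt{1-2q}$ through $\alpha^3$ and separating the rational and $\sqrt{1-2q}$ pieces, yield the other term $-2(q^3+3q^2+(4\sqrt{1-2q}-6)q-2\sqrt{1-2q}+2)/(3q)$.

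The middle regime $(3-\sqrt{5})/2<q\le4/9$ is the main obstacle: both $x_F$ and $x_+$ lie in the segment, and the relative ordering of $x_E$ and $x_+$ is not constant in this regime (they coincide at the root of $q^{3/2}+4q-2=0$). To avoid a further case split I would use the decomposition
\begin{equation*}
\mathrm{Area}(D\cap D_2)=\mathrm{Area}(D_2\cap\{y<-q\})-\mathrm{Area}(D_2\cap\{y<\mathrm{parabola}\}),
\end{equation*}
where the first piece is again the triangle $M,(x_E,-q),(x_\star,-q)$ of area $(1-2q)^3/(2q(1-q))$, and the second is the curvilinear triangle with corner $M$ bounded by edges $MO,MN$ and the parabola, of area $\int_{-q}^{x_F}\frac{(1-2q)(x+q)}{q(1-q)}\,dx+\int_{x_F}^{x_+}\bigl[-\frac{x^2}{4q}-q-\frac{qx-(1-2q)}{1-q}\bigr]dx$. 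Both pieces are well-defined regardless of the $x_E$ vs $x_+$ ordering, which disposes of the subcase issue. The hard part is the final algebraic consolidation: the $x_F$-contributions carry $\sqrt{1-2q}$ while the $x_+$-contributions carry $\sqrt{q}$, and systematic use of $(\sqrt{q}-1)(\sqrt{q}+1)=q-1$ to rationalise the $x_+$-terms is needed to recover the common denominator $6(\sqrt{q}-1)^3(q^{3/2}+q)$ of the theorem. Continuity at the thresholds $q=(3-\sqrt{5})/2$ and $q=4/9$, together with the bound $p_3^{PD}\le q/(2(1-q))$ from the preceding lemma, provide natural sanity checks on the final expression.
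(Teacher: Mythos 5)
Your proposal is correct and follows the same skeleton as the paper's proof: Proposition \ref{prop: p3PDarea}, the same critical abscissae ($x_F$, $x_E$, $x_\star$, and the parabola--$MN$ intersections), and the same three regimes with the same thresholds $(3-\sqrt{5})/2$ and $4/9$. Your first and third regimes are computationally identical to the paper's (your split $\int_{x_F}^{x_E}+\int_{x_E}^{0}$ recombines to the paper's $\int_{x_F}^{0}-\tfrac{q^3}{2(1-q)}$). The genuine difference is the middle regime. The paper slices vertically as $\int_{x_F}^{x_E}[\text{edge } MO-\text{parabola}]+\int_{x_E}^{x_2}\tfrac{x^2}{4q}+\int_{x_2}^{x_\star}[(-q)-\text{edge } MN]$, which as a geometric description presupposes $x_E<x_2$; you correctly observe that this ordering reverses inside the regime, at the root of $q^{3/2}+4q-2=0$ (approximately $q\approx 0.430 \in ((3-\sqrt5)/2,\,4/9]$). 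Your complementary decomposition
\begin{equation*}
\mathrm{Area}(D\cap D_2)=\mathrm{Area}\bigl(D_2\cap\{y<-q\}\bigr)-\mathrm{Area}\bigl(D_2\cap\{y<-\tfrac{x^2}{4q}-q\}\bigr)
\end{equation*}
sidesteps this entirely and is the cleaner justification. It is worth noting, though, that the paper's formula is not actually wrong in the exceptional subinterval: the sum of the three oriented integrals is insensitive to the relative position of $x_E$ and $x_2$, because when the limits reverse the spurious contributions $\int_{x_2}^{x_E}[\text{edge }MN-\text{parabola}]$ and $\int_{x_2}^{x_E}[\text{parabola}-\text{edge }MN]$ cancel exactly; so both routes evaluate to the same closed form (I checked numerically that they agree, e.g.\ at $q=0.4$). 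What your approach buys is a case analysis that is manifestly valid throughout $((3-\sqrt5)/2,4/9]$; what it costs is a slightly heavier final consolidation, since the $\sqrt{q}$-terms from $x_+$ and the $\sqrt{1-2q}$-terms from $x_F$ must still be rationalised to the common denominator $6(\sqrt{q}-1)^3(q^{3/2}+q)$, a step you describe but do not carry out. Your sanity checks (continuity at the thresholds and the bound $p_3^{PD}\le q/(2(1-q))$) are appropriate.
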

\begin{proof}
According to Proposition \ref{prop: p3PDarea}, in order  to compute $p_3^{PD}$ we need to compute the area of $D\cap D_2$. As in proof of Theorem \ref{thm: SH}, the parabola $y=-\frac{x^2}{4q}-q$ always intersects with the line $y=\frac{(1-q)x}{q}$ inside the domain $(-1,0)^2$ at the point
$$
F=(2(q+\sqrt{1-2q}-1), 2(1-q)(q+\sqrt{1-2q}-1)/q).
$$
The point $F$ is inside the edge $DC$ if $0\leq q\leq 4/9$ and is outside $DC$ whenever $4/9<q<0.5$. 

On the other hand, the parabola $y=-\frac{x^2}{4q}-q$ meets the line  $y=\frac{qx-(1-2q)}{1-q}$ at two points  $G_1=(x_1, -\frac{x_1^2}{4q}-q)$ and $G_2=(x_2, -\frac{x_2^2}{4q}-q)$ with
$$
x_1=2q \Big(-\frac{q}{1-q}-\frac{1-2q}{(1-q)\sqrt{q}}\Big),\quad x_2=2q\Big(-\frac{q}{1-q}+\frac{1-2q}{(1-q)\sqrt{q}}\Big).
$$
$G_1$ is always outside edge $AC$. $G_2$ is inside it if $\frac{3-\sqrt{5}}{2}< q\leq 4/9$ and  outside it otherwise. Therefore, we have three cases (see Figure \ref{fig: PD} for illustration).
\begin{figure}
\begin{center}
\includegraphics[width=\linewidth]{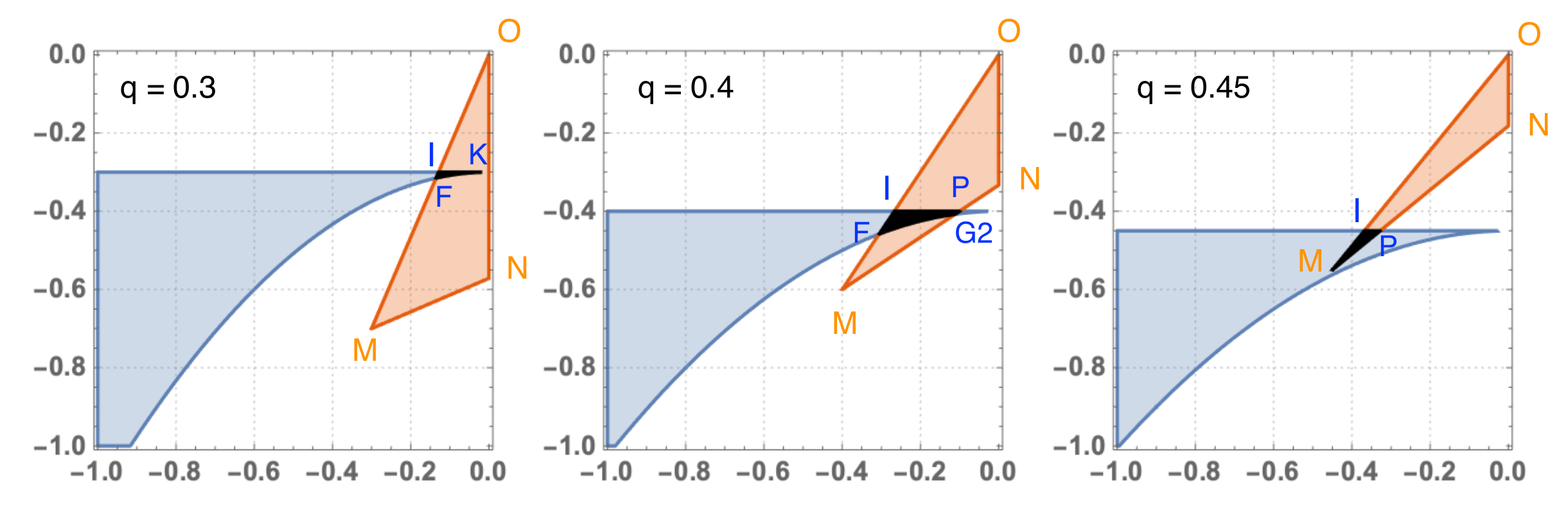}
\caption{$D \cap D_2$ for $q = 0.3, \ 0.4, \ 0.45$}
\label{fig: PD}
\end{center}
\end{figure}

\begin{enumerate}[(1)]
\item \textbf{For $0\leq q\leq \frac{3-\sqrt{5}}{2}$}: the intersection $D\cap D_2$ is formed by $I, F, K$ where $I=(-q^2/(1-q),-q)$ (which is the intersection of $y=-q$ with $y=\frac{(1-q)x}{q}$),  $K=(0.-q)$ and
\begin{itemize}
\item $I$ and $F$ are connected by the line $y=\frac{(1-q)x}{q}$,
\item $F$ and $K$ are connected by the parabola $y=-\frac{x^2}{4q}-q$,
\item $K$ and $I$ are connected by the line $y=-q$.
\end{itemize}
In this case
\begin{align*}
p_3^{PD}&=\frac{1}{1-2q}\mathrm{Area}(D\cap D_2)
\\&=\frac{1}{1-2q}\Big(\mathrm{Area}(OFK)-\mathrm{Area}(OIK)\Big)
\\&=\frac{1}{1-2q}\Bigg(\int_{2(q+\sqrt{1-2q}-1)}^0\Big[\frac{(1-q)x}{q}+\frac{x^2}{4q}+q\Big]\,dx-\frac{1}{2}\frac{q^3}{(1-q)}\Bigg)
\\&=\frac{1}{1-2q}\Big[-\frac{2(q^3+3q^2+(4\sqrt{1-2q}-6)q-2\sqrt{1-2q}+2)}{3q}-\frac{1}{2}\frac{q^3}{(1-q)}\Big].
\end{align*}
\item \textbf{For $\frac{3-\sqrt{5}}{2}<q\leq 4/9$}: the intersection $D\cap D_2$ is formed by $F$, $I$, $P=((q^3-3q+1)/q,-q)$ (which is the intersection of $y=-q$ with $y=\frac{qx-(1-2q)}{1-q}$) and $G_2$, where 
\begin{itemize}
\item $F$ and $I$ are connected by the line $y=\frac{(1-q)x}{q}$,
\item $I$ and $P$ are connected by the line $y=-q$,
\item $P$ and $G_2$ are connected by the line $y=\frac{qx-(1-2q)}{1-q}$,
\item $G_2$ and $F$ are connected by the parabola $y=-\frac{x^2}{4q}-q$.
\end{itemize}
In this case
\begin{align}
p_3^{PD}&=\frac{1}{1-2q}\mathrm{Area}(D\cap D_2)\notag
\\&=\frac{1}{1-2q}\Bigg(\int_{2(q+\sqrt{1-2q}-1)}^{-\frac{q^2}{1-q}}\Big[\frac{(1-q)x}{q}+\frac{x^2}{4q}+q\Big]\,dx+\int_{-\frac{q^2}{1-q}}^{x_2}\frac{x^2}{4q}\,dx+\int_{x_2}^{(q^2-3q+1)/q}\Big[-q-\frac{qx-(1-2q)}{1-q}\Big]\, dx\Bigg)\notag
\\&=\frac{-16 \left(\sqrt{1-2 q}-1\right) q^{3/2}+2 q^{5/2}+15 q^3+\left(8 \sqrt{1-2 q}-25\right) q^2+q-8 \sqrt{1-2 q}+2 \sqrt{q} \left(8 \sqrt{1-2 q}-5\right)+5}{6 \left(\sqrt{q}-1\right)^3 \left(q^{3/2}+q\right)}.
\end{align}
\item \textbf{For $4/9<q<0.5$}: the intersection $D\cap D_2$ is the triangle \textit{MIP} where $M=(-q,-(1-q))$
\begin{itemize}
\item $M$ and $I$ are connected by the line $y=\frac{(1-q)x}{q}$,
\item $I$ and $P$ are connected by the line $y=-q$,
\item $P$ and $M$ are connected by the line $y=\frac{qx-(1-2q)}{1-q}$.
\end{itemize}
In this case
\begin{align*}
p_3^{PD}&=\frac{1}{1-2q}\mathrm{Area}(D\cap D_2)
\\&=\frac{1}{1-2q}\mathrm{Area}(MIP)
\\&=\frac{1}{2(1-2q)}\Big(\frac{q^2-3q+1}{q}+\frac{q^2}{1-q}\Big)\Big(1-2q\Big)
\\&=\frac{1}{2}\frac{(1-2q)^2}{q(1-q)}.
\end{align*}
\end{enumerate}
In conclusion
\begin{equation*}
p_3^{PD}=\begin{cases}
\frac{1}{1-2q}\Big[-\frac{2(q^3+3q^2+(4\sqrt{1-2q}-6)q-2\sqrt{1-2q}+2)}{3q}-\frac{1}{2}\frac{q^3}{(1-q)}\Big],\quad 0\leq q\leq \frac{3-\sqrt{5}}{2},\\
\frac{-16 \left(\sqrt{1-2 q}-1\right) q^{3/2}+2 q^{5/2}+15 q^3+\left(8 \sqrt{1-2 q}-25\right) q^2+q-8 \sqrt{1-2 q}+2 \sqrt{q} \left(8 \sqrt{1-2 q}-5\right)+5}{6 \left(\sqrt{q}-1\right)^3 \left(q^{3/2}+q\right)},\quad \frac{3-\sqrt{5}}{2}<q\leq 4/9,\\
\frac{1}{2}\frac{(1-2q)^2}{q(1-q)},\quad 4/9<q<0.5.
\end{cases}
\end{equation*}
This completes the proof of this theorem.
\end{proof}
\section{Summary and outlook}
\label{sec: summary}
It has been shown that in human behaviours and other biological settings, mutation is non-negligible~\cite{traulsen:2009aa,rand2013evolution,zisis2015generosity}. How mutation affects the complexity and bio-diversity of the evolutionary systems is a fundamental question in evolutionary dynamics \cite{nowak:2006bo,sigmund:2010bo}. In this paper, we have addressed this question for random social dilemmas by computing explicitly the probability distributions of the number of equilibria in term of the mutation probability. Our analysis based on random games is highly relevant and practical,  because  it is often the case that one  might  know the nature of a game at hand (e.g., a coordination or cooperation dilemma), but it is  very difficult and/or costly to measure the exact values  of the game's  payoff matrix.  Our results have clearly shown the influence of the mutation on the number of equilibria in SH-games and PD-games. The probability distributions in these games are much more complicated than in  SD-games and H-games and significantly depend on the mutation strength. For a summary of our results, see again Box 1 and Figure \ref{fig: plot all games}. Our analysis has made use of suitable changes of variables, which expressed the probability densities in terms of area of certain domains. For future work, we plan to generalise our method to other approaches to studying  random social dilemmas such as finite population dynamics and payoff disturbances \cite{huang:2010aa, szolnoki2019seasonal,Amaral2020a, Amaral2020b}, as well as  to multi-player social dilemma games \cite{Pacheco2009,Souza2009, Luo2021}.

\section*{Acknowledgments} TAH is also supported by Leverhulme Research Fellowship (RF-2020-603/9).

\bibliographystyle{plain}

\end{document}